\documentclass[a4paper,11pt]{article}
\usepackage{jheppub} 
\makeatletter
\renewcommand{\@fpheader}{\phantom{Prepared for submission to JHEP}}
\makeatother
\usepackage{tikz}
\usepackage{enumitem}
\usepackage{graphicx}
\usepackage{tabularx} \usetikzlibrary{arrows.meta,decorations.pathmorphing,decorations.pathreplacing,calc,shapes.geometric}
\definecolor{accent}{RGB}{30,90,160}
\definecolor{bingray}{RGB}{236,236,236}
\definecolor{hlyellow}{RGB}{255,246,205}
\usepackage{amsmath,amssymb,amsthm}
\usepackage{mathrsfs}
\theoremstyle{plain}
\newtheorem{theorem}{Theorem}
\newtheorem{proposition}{Proposition}

\newtheorem{lemma}{Lemma}
\theoremstyle{definition}
\newtheorem{definition}[theorem]{Definition}
\theoremstyle{plain}

\theoremstyle{remark}

\title{The Page Transition in the Multi-Time Correlations of Hawking Radiation}

\author{Tushar Waghmare}
\affiliation{%
Department of Physics and Astronomy,\\
University of Manchester,\\
Oxford Road, Manchester M13 9PL, United Kingdom
}
\emailAdd{tushar.waghmare@postgrad.manchester.ac.uk}

\abstract{The Page curve and the island formula describe a reorganization of quantum information in Hawking radiation across the Page time. We study how this reorganization appears in correlations among emissions collected at several retarded times. The late radiation is treated as a covariant multi-time process, or process tensor, and
\(\chi_{\mathrm{island}}=\mathcal N_\varnothing-\mathcal N_R\)
measures the temporal memory removed when the early radiation register \(R\) is supplied. Operationally, this quantity is the corresponding change in the optimal distinguishability of the radiation process from a Markovian reference. We exhibit evaporation processes whose single-bin states and all process marginals below a finite order \(k_\ast\geq3\) remain fixed and coincide with a CP-divisible reference, while \(\chi_{\mathrm{island}}\) changes by order one at order \(k_\ast\). The Page transition can therefore appear as a transition in the Markov order of the radiation process. For mediated evaporation, entanglement-wedge reconstruction of the predictive interior from \(R\) Markovianizes the late process; conversely, when the late process has a minimal predictive memory and its future-affecting information is mediated by that memory, Markovianization by \(R\) supplies a simulator of its influence. A Hayden--Preskill evaporation comb and an evaporating-JT calculation realize this mechanism. The late radiation retains temporal memory across the transition, while the early radiation becomes sufficient for predicting its future multi-time statistics.
}

\begin{document}
\maketitle
\flushbottom

\section{Introduction}
\label{sec:introduction}

The Page curve describes the transfer of quantum information from an
evaporating black hole to its Hawking radiation \cite{PageInformation}.  In
semiclassical gravity, the quantum-extremal-surface transition and the island
formula reproduce this curve and place part of the black-hole interior in the
entanglement wedge of the collected radiation
\cite{EngelhardtWall,Penington,AEMM,AMMZ,PSSY}.  Entanglement-wedge
reconstruction then states that interior operators can be represented on the
radiation within a code subspace \cite{Penington,CotlerUniversal}.  These
results identify where the information is encoded and when the interior
becomes reconstructible.

The Page curve and the island formula have sharpened one central aspect of this problem by identifying how the entropy of the Hawking radiation evolves and when the entanglement wedge of the radiation acquires an interior component \cite{Penington,AEMM}. This establishes where information can be encoded and, through entanglement-wedge reconstruction, when interior information becomes recoverable from the radiation. Unitary evaporation, however, also constrains the dynamical organization of that information. An asymptotic observer receives Hawking quanta successively over retarded time, so the radiation defines both a sequence of states and a hierarchy of correlations among emissions at different times. A complete operational account of the Page transition should therefore relate the geometric reorganization of the entanglement wedge to a corresponding reorganization of these time-resolved correlations: how information carried by the black hole influences later radiation, how that influence is distributed across successive emissions, and what portion of the previously collected radiation is sufficient to predict its future effects.

This dynamical question is complementary to the entropy and reconstruction viewpoints. Generalized-entropy calculations determine the dominant saddle associated with a quantum extremal surface for a radiation region, while reconstruction identifies the bulk information accessible from that region. The same entropy curve can nevertheless be compatible with distinct temporal correlation structures, and information that is invisible in individual emissions can reside in correlations extending across several emission times.

The relevant question is therefore when the early radiation becomes sufficient side information for predicting the subsequent Hawking process. In other words, when conditioning on the early radiation screens the interior's influence on correlations among later emissions. Characterizing this change requires an operational object sensitive to the joint, multi-time response of the radiation rather than only to its state at a single time. The process tensor, or quantum comb, provides such a description: it assigns joint outcome probabilities to sequences of temporally ordered interventions and allows correlations to be classified by the temporal span required to detect them \cite{Pollock,CDP}. We use this multi-time description to formulate the Page transition as a reorganization of the predictive memory carried by the Hawking radiation.

We divide the radiation available to the observer into an early register
\(R\) and late bins \(L=\{b_1,\ldots,b_n\}\).  The late process can be tested
after discarding \(R\), or with \(R\) retained as side information.  Let
\(\mathcal N_\varnothing\) and \(\mathcal N_R\) denote the corresponding
operational distances from the factorized Markov-comb class.  The change in memory produced by access to the early radiation is
\begin{equation}
  \chi_{\mathrm{island}}(t)
  =\mathcal N_\varnothing(t)-\mathcal N_R(t).
  \label{eq:intro-chi}
\end{equation}
It quantifies the reduction in late-radiation temporal memory resulting from
conditioning on \(R\).  Both terms are
radiation observables: they are obtained from multi-time tests on the
collected radiation, with the second family of tests allowed to process the
early register.  The operational distance also gives
\(\chi_{\mathrm{island}}\) a decision-theoretic reading as the reduction in
optimal distinguishability from the Markovian reference class produced by
access to \(R\).

The mechanism is predictive reconstruction.  A minimal interior memory
\(M_{\min}\) carries the information from the past that affects future late
emissions.  In the evaporation class considered here, the early--late
correlations are mediated by this interior history, and conditioning on the
predictive memory renders the late process approximately Markovian.  Once
entanglement-wedge reconstruction supplies a channel from \(R\) to
\(M_{\min}\), the early radiation can reproduce the influence of the
interior on subsequent emissions.  The Page transition is then read as the
acquisition of predictive sufficiency by \(R\).

Our first result localizes the transition at a finite temporal order.  We
construct a family of evaporation processes for which every single-bin state
is thermal and unchanged, and every reduced process involving fewer than
\(k_\ast\) time steps agrees with a fixed Markovian reference.  At order
\(k_\ast\geq3\), the conditioned-memory observable changes from a small value
to an order-one value.  The minimal example has \(k_\ast=3\) and carries a
shared parity correlation among three late bins.  The Page transition can
therefore be invisible to one-time thermality and single-step divisibility
while remaining visible to a finite multi-time experiment.

The second result relates this transition to interior reconstruction.  For
mediated evaporation, reconstruction of the predictive interior from \(R\)
on a code subspace bounds the residual \(R\)-conditioned memory by the
mediation error, the memory remaining after conditioning on the interior, and
the reconstruction error.  The converse runs in the predictive direction:
when the late process has a minimal sufficient memory and no independent
future-affecting side channel, Markovianization by \(R\) yields a simulator of
that memory's influence on all admissible future tests.  Together, these two
directions establish \(\chi_{\mathrm{island}}\) as a witness of island-like
reconstruction within the stated class, while preserving the distinction
between recovery of predictive information and reconstruction of an
arbitrarily enlarged microscopic interior register.

The final result expresses the transition as a memory budget.  The conditional
memory cost is the smallest auxiliary register that, together with \(R\),
renders the late process Markovian to a chosen tolerance.  Before recovery it
is set by the minimal predictive memory; after recovery the trivial register
suffices.  This cost supplies a resource interpretation of the transition and
motivates a semiclassical comparison with the generalized-entropy budget of
the quantum extremal surface.

We realize these statements in a finite-dimensional evaporation family
inspired by Hayden--Preskill decoding \cite{HaydenPreskill}.  A two-bin
screening core gives the conditioned-memory transition in closed form, and a
three-bin parity extension fixes its first visible order.  Numerical tests
vary the detector family, code-subspace size, temporal resolution, and memory
dimension.  These variations change the resolved scale and the recovery
accuracy while preserving the same transition in predictive sufficiency.

The semiclassical realization uses an evaporating JT black hole coupled to a
bath, with an infalling diary serving as the predictive interior memory.  The
order-three witness becomes a four-entropy combination evaluated with the
island prescription.  The regions containing \(R\) exchange between exterior
and diary-containing saddles at different retarded times, producing a finite
process-transition interval rather than a single completion time
\cite{AEMM,Hollowood:2020cou,Hollowood:2020couPage}.  Once the relevant
islands are compatible, the geometric terms cancel in the conditional
combination and the remaining quantity is a bulk conditional mutual
information, corrected by the finite QES gaps and the reconstruction error.
This gives the gravitational form of the screening mechanism.

In summary, the Page curve and the island prescription determine the entropy and entanglement wedge of the radiation. Entanglement-wedge reconstruction and recoverability determine how interior information can be represented on the radiation. Process tensors determine how that representation changes the predictions of an ordered radiation experiment.  Our previous work supplied the covariant process construction and the operational memory measure \cite{priorpaper}; the present paper identifies the Page-time reorganization, its finite Markov order, and its predictive interpretation.

The paper is organized as follows.  Section~\ref{sec:framework} introduces
the covariant radiation process and its operational memory.  Section~\ref{sec:conditioned}
defines the early-radiation conditioning and the memory observables.
Section~\ref{sec:results} states the finite-order, reconstruction, converse,
and memory-cost results.  Sections~\ref{sec:model} and~\ref{sec:numerics}
give the evaporation comb and its robustness checks.
Section~\ref{sec:semiclassical} supplies the JT and QES realization, and
Section~\ref{sec:discussion} develops the process wedge, the semiclassical
memory budget, and the algebraic formulation.  The appendices collect the
proofs, numerical certification, semiclassical bookkeeping, and notation.

\section{The covariant process-tensor framework}
\label{sec:framework}

An asymptotic observer receives Hawking radiation as an ordered sequence of
coarse-grained emissions.  The state of one radiation bin determines the
statistics of a measurement performed at one retarded time.  Temporal memory is
a property of the joint response across several times: an operation performed
on one accessible portion of the radiation can alter the conditional statistics
assigned to later portions, either through correlations already present in the
field or through an ancillary record retained by the observer.  The process
tensor is the operational object that contains these multi-time probabilities.

To describe these multi-time probabilities, let $b_1,\ldots,b_n$ be the
radiation bins encountered along the collection worldline. At slot $i$, the
observer chooses an instrument
$\mathsf{A}^{(i)}=\{\mathsf{A}^{(i)}_{a_i}\}_{a_i}$ from an admissible family
$\mathcal{I}$.  Each element is a completely positive map, and the sum over its
outcomes is trace preserving.  The maps include the preparation and readout of
a localized detector and may pass an ancillary quantum or classical record to
later slots; they do not require radiation to be sent back toward the black
hole.  The late-radiation process tensor $\Upsilon_L$ is the multilinear
functional that returns the joint outcome probabilities,
\begin{equation}
  p(a_1,\ldots,a_n\mid
    \mathsf{A}^{(1)},\ldots,\mathsf{A}^{(n)})
  =
  \Upsilon_L\!\left[
    \mathsf{A}^{(1)}_{a_1},\ldots,
    \mathsf{A}^{(n)}_{a_n}
  \right].
  \label{eq:processtensor}
\end{equation}
Equivalently, $\Upsilon_L$ is a quantum comb whose open legs are contracted
with the Choi operators of the instrument elements through the link product
\cite{Pollock,CDP}.  Its causality constraints ensure that summing over all
outcomes at a slot removes any dependence on instrument choices made later.
In a finite-dimensional truncation, an informationally complete family
reconstructs the comb.  For the radiation field, $\Upsilon_L$ denotes the
operational equivalence class of combs restricted to the detector algebra and
the instruments in $\mathcal{I}$.  The accessible memory is therefore stated
relative to this family.

To obtain this comb from the radiation field, we use the covariant detector construction of
Ref.~\cite{priorpaper}.  The observer's collection worldline is covered by ordered,
partly overlapping causal diamonds, with a coarse retarded-time interval and an
accessible radiation algebra assigned to each diamond.  A localized probe
couples to the field inside the corresponding diamond; its preparation,
interaction, and readout realize one element of the instrument at that slot.
The field evolution and the retained detector ancillas link successive slots,
and their composition gives $\Upsilon_L$ on the ordered bins
$b_1,\ldots,b_n$.  Once the worldline, detector supports, coarse graining, and
instrument family have been fixed, no additional choice of spacetime foliation
enters the process.  This is the sense in which the multi-time radiation
observable is covariant.

To identify temporal memory, we compare the resulting process with a Markovian
reference. A process is Markovian when the present output is a sufficient state for the
future at every slot.  In the comb representation, its Choi operator factorizes
as
\begin{equation}
  \Upsilon^{\mathrm{M}}_{n:0}
  =
  \rho_0\otimes
  J[\Lambda_{1:0}]\otimes\cdots\otimes
  J[\Lambda_{n:n-1}],
  \label{eq:markovfactorization}
\end{equation}
up to the ordering of the input and output legs, with each
$\Lambda_{i:i-1}$ completely positive and trace preserving.  The map advancing
one step is then independent of the earlier intervention history once the
state delivered at the preceding slot is specified.  We denote this full
factorized reference class by $\mathrm{CPD}$, retaining the notation of the
prior work.  Throughout this manuscript, $\mathrm{CPD}$ refers to the
Markovian combs satisfying~\eqref{eq:markovfactorization}; it is stronger than
CP-divisibility of a family of reduced two-time dynamical maps.  The latter is a
lower-order diagnostic and can hold even when the full process tensor contains
multi-time memory.

The reference class is generally nonconvex.  A convex mixture of factorized
combs can retain a common classical label across the slots and thereby generate
temporal correlations.  The optimization below is therefore performed over
the factorized combs themselves, rather than over their convex hull, so that
zero memory is equivalent to genuine process-tensor Markovianity.

We measure departure from this reference class through the experiments
available to the observer. A tester $T\in\mathcal{I}$ is an adaptive sequence of admissible instruments,
with later choices allowed to depend on earlier outcomes and with a final
measurement on the retained record.  For a base divergence $D$ on the tester
outcomes, define the operational comb distance
\begin{equation}
  d_{\mathrm{op}}(\Upsilon,\Lambda)
  =
  \sup_{T\in\mathcal{I}}
  D\!\left(T[\Upsilon] \,\middle\|\, T[\Lambda]\right).
  \label{eq:dop}
\end{equation}
The trace-distance choice gives the optimal bias in symmetric comb
discrimination, while the relative-entropy choice gives the asymptotic
type-II error exponent under the corresponding tester class
\cite{CDP,GutoskiWatrous}.  The distance obeys data processing under every
admissible comb-to-comb channel $\mathcal{E}$,
\begin{equation}
  d_{\mathrm{op}}\!\left(
    \mathcal{E}[\Upsilon],\mathcal{E}[\Lambda]
  \right)
  \le
  d_{\mathrm{op}}(\Upsilon,\Lambda).
  \label{eq:dopdpi}
\end{equation}
The temporal memory accessible to the observer is its distance from the
Markovian reference set,
\begin{equation}
  \mathcal{N}[\Upsilon]
  =
  \inf_{\Lambda\in\mathrm{CPD}}
  d_{\mathrm{op}}(\Upsilon,\Lambda).
  \label{eq:Ndef}
\end{equation}
In the finite-dimensional families used below the infimum is attained.  This
definition compares processes through the experiments they support, rather
than through a divergence between untested Choi operators.  Computable state
or outcome-level quantities can therefore serve as witnesses or exact
specializations of~\eqref{eq:Ndef} once an instrument family and a base
divergence are fixed.

\begin{figure}[t]
\centering
\begin{tikzpicture}[
  scale=1.05, font=\small,
  bdy/.style={line width=1pt},                                  
  hor/.style={line width=0.9pt, dashed},                        
  rec/.style={-{Latex[length=2.6mm]}, line width=1pt, dashed, color=accent},
]

\draw[bdy] (0,0) -- (0,12);          
\draw[bdy] (0,12) -- (6,6);          
\draw[bdy] (6,6) -- (0,0);           
\node[above] at (0,12) {$i^{+}$};
\node[right]  at (6,6)  {$i^{0}$};
\node[below]  at (0,0)  {$i^{-}$};
\node[rotate=-45] at (1.85,10.78) {$\mathscr{I}^{+}$};
\node[rotate=45]  at (4.35,3.55)  {$\mathscr{I}^{-}$};

\fill[black!16] (3.5,3.35) -- (0,6.85) -- (0,7.15) -- (3.5,3.65) -- cycle;
\draw[line width=0.5pt, black!45] (3.5,3.5) -- (0,7);

\fill[black!8] (0,7) -- (0,9.5) -- (2.5,9.5) -- cycle;            
\draw[decorate, decoration={zigzag, segment length=5pt, amplitude=2.2pt},
      line width=1pt] (0,9.5) -- (2.5,9.5);                       
\node[above] at (1.0,9.62) {singularity};
\draw[hor] (0,7) -- (2.5,9.5);                                    
\node[rotate=45] at (1.55,7.95) {horizon};

\fill[accent!22] (0,7.4) -- (0.95,8.6) -- (0,9.2) -- cycle;
\node at (0.32,8.4) {$M$};
\fill[accent] (0.95,8.6) circle (1.4pt);
\node[anchor=west] at (1.08,8.74) {QES};

\draw[decorate, decoration={brace, amplitude=5pt, mirror, raise=3pt},
      line width=0.7pt] (5.4,6.6) -- (4.4,7.6);
\node at (5.18,7.42) {$R$};

\fill (4.0,8.0) circle (1.3pt);
\draw[line width=0.7pt] (3.85,7.85) -- (4.15,8.15);              
\node[anchor=west] at (4.30,8.18) {$t_{\mathrm{Page}}$};

\foreach \cx/\cy/\lab in {3.6/8.4/{b_1}, 3.2/8.8/{b_2}, 2.8/9.2/{b_3}}{%
  \draw[fill=black!12, draw=black!55, line width=0.5pt]
    (\cx,\cy+0.16) -- (\cx+0.16,\cy) -- (\cx,\cy-0.16) -- (\cx-0.16,\cy) -- cycle;
}
\node[anchor=west] at (3.84,8.60) {$b_1$};
\node[anchor=west] at (3.44,9.00) {$b_2$};
\node[anchor=west] at (3.04,9.40) {$b_3$};

\draw[rec] (4.6,7.4) to[out=215, in=315] (0.82,8.56);
\node[align=center, text=accent] at (2.62,6.10)
  {$\mathcal{R}_{R\to M}$\\[-1pt]\footnotesize EWR,\ \ $\epsilon(C)$};

\end{tikzpicture}
\caption{Conformal diagram of an evaporating black hole. Hawking radiation reaches
$\mathscr{I}^{+}$, where the early register $R$ (earlier retarded time, nearer $i^{0}$)
and the late bins $b_1,b_2,b_3$ are separated by the Page point $t_{\mathrm{Page}}$.
The interior memory $M$ is the island bounded by the quantum extremal surface. After
$t_{\mathrm{Page}}$ the early radiation reconstructs $M$ through $\mathcal{R}_{R\to M}$,
entanglement-wedge reconstruction with error $\epsilon(C)$ on a code subspace $C$.}
\label{fig:schematic}
\end{figure}

To determine when memory becomes visible, we can also restrict the tester
class by the number of slots it spans. A process
may coincide with a fixed Markovian comb on every reduced process involving at
most $K$ slots and still differ at a higher temporal order.  Single-bin
thermality, adjacent-pair process tensors, and CP-divisibility of the reduced
one-step maps then remain unchanged, while a longer adaptive experiment
resolves the hidden correlation.  Quantum Markov order formalizes this
instrument-dependent screening structure \cite{Taranto}: it identifies the
smallest temporal history that, under the stated instruments, renders the
future conditionally independent of the more distant past.  The resulting
hierarchy distinguishes the full multi-time memory of the radiation from every
fixed low-order projection of it.

\section{Island-conditioned processes and the order parameter}
\label{sec:conditioned}

The process-tensor description acquires its black-hole interpretation once the
radiation is divided according to the information available to an asymptotic
observer. At retarded time $t$, let $R$ denote the radiation collected before
the late experiment and let $L=\{b_1,\ldots,b_n\}$ denote the ordered late bins
on which the observer applies instruments from $\mathcal{I}$. The degrees of
freedom that transmit correlations between the late emissions are represented
by an interior memory $M$. In the finite-dimensional formulation, $M$ is a
tensor factor; in gravitational language, it represents the predictive content
of the island algebra $\mathcal{A}_I$. We consider evaporation processes for
which $M$ is a sufficient memory of the late radiation,
\begin{equation}
  \mathcal{N}_M(L)\le\delta,
  \label{eq:interior-sufficiency}
\end{equation}
so that supplying $M$ screens the earlier late bins from the future up to
accuracy $\delta$. The Page transition can then be expressed as a change in
whether the early radiation supplies the same predictive information.

To compare the predictive information available with and without $R$, the
observer has two operational descriptions of the same evaporation process.
A late-only experiment traces out the early register and probes the marginal
comb $\Upsilon_L=\mathrm{Tr}_R\,\Upsilon_{RL}$, where $\Upsilon_{RL}$ carries
$R$ on an initial side leg and the bins of $L$ on the subsequent intervention
slots. An assisted experiment retains $R$ and probes $\Upsilon_{L|R}$. Its
tester may process $R$ before the first late intervention and carry the
resulting classical or quantum memory through the experiment, while the
interventions on the radiation bins continue to belong to $\mathcal{I}$.

The two access conditions have corresponding Markovian reference classes. The
late-only reference is the Markovian reference set $\mathrm{CPD}$ fixed in
Section~\ref{sec:framework}. With $R$ available as a static side register, the
reference becomes $\mathrm{CPD}_R$, the assisted combs whose late legs are
Markovian once that side information is supplied. Their temporal memories are
\begin{equation}
  \mathcal{N}_\varnothing
  =\min_{\Lambda\in\mathrm{CPD}}
   d_{\mathrm{op}}\!\left(\Upsilon_L,\Lambda\right),
  \qquad
  \mathcal{N}_R
  =\min_{\Lambda\in\mathrm{CPD}_R}
   d_{\mathrm{op}}\!\left(\Upsilon_{L|R},\Lambda\right),
  \label{eq:conditioning}
\end{equation}
which specialize the operational memory measure of~\eqref{eq:Ndef}. Both
quantities refer to the same late dynamics and the same instrument family; they
differ only in the predictive information supplied at the initial leg.

The difference between these two memories defines the island order parameter,
\begin{equation}
  \chi_{\mathrm{island}}(t)
  =\mathcal{N}_\varnothing(t)-\mathcal{N}_R(t).
  \label{eq:chi}
\end{equation}
This quantity is the amount of late-radiation memory screened by access to the
early radiation. For an arbitrary side register, its value records the balance
between two operational effects: side information can expose correlations to
the tester, and it can supply the common cause that renders those correlations
conditionally Markovian. The evaporation setting selects the second behavior
when $R$ reconstructs the predictive influence of $M$.

The underlying structure is the screening implication
\begin{equation}
  \big(M\ \text{Markovianizes}\ L\big)
  \quad\text{and}\quad
  \big(R\xrightarrow{\ \mathcal{R}_{R\to M}\ }M\big)
  \quad\Longrightarrow\quad
  \big(R\ \text{Markovianizes}\ L\big),
  \label{eq:screening}
\end{equation}
where $\mathcal{R}_{R\to M}$ is a recovery or simulation map. Reconstruction is
understood on a code subspace $C$, with error $\epsilon(C)$. Equation
\eqref{eq:interior-sufficiency} supplies the screening accuracy of the
interior, while $\epsilon(C)$ measures how faithfully $R$ reproduces its
predictive action. As the recovery quality improves across the Page regime,
$\mathcal{N}_R$ approaches the residual scale set by $\delta$ and
$\epsilon(C)$, whereas $\mathcal{N}_\varnothing$ continues to measure the
memory visible to a late-only observer. The characteristic transition therefore
consists of a persistent $\mathcal{N}_\varnothing$, a decreasing
$\mathcal{N}_R$, and a rise of $\chi_{\mathrm{island}}$ from $o(1)$ to
$\Theta(1)$.

The same reorganization can be stated through the amount of predictive memory
that must be supplied in addition to $R$. At tolerance $\epsilon$, define
\begin{equation}
  m(L|R)\equiv m_\epsilon(L|R)
  =\inf_{M'}\big\{\,\log\dim M'\ :\
    \mathcal{N}_{RM'}(L)\le\epsilon\,\big\},
  \label{eq:mcost}
\end{equation}
where the infimum ranges over finite-dimensional auxiliary registers admitted
by an extension of the late process. Restricting $M'$ to sufficient memories of
a Markov dilation gives $m_\epsilon$ its direct interpretation as the
predictive memory cost; the minimal such memory, written $M_{\min}$, is the
smallest register through which the future of the late process depends on its
past. When $R$ carries little information about the interior, the sufficient
auxiliary register has the scale of this minimal memory,
$m_\epsilon(L|R)\simeq\log\dim M_{\min}$, with
$\log\dim M_{\min}\le\log\dim M$ and equality when the interior is itself the
minimal memory, $M_{\min}=M$. When $R$ simulates the influence of $M_{\min}$
within the chosen tolerance, the auxiliary cost approaches zero. Thus
$\chi_{\mathrm{island}}$ measures the change through operational
distinguishability, while $m_\epsilon(L|R)$ measures the same change through
the size of a sufficient memory.

A multi-time experiment also carries a resolution $K$, the largest number of
intervention slots spanned by its testers. Let $d_{\mathrm{op}}^{(K)}$ denote
the operational distance restricted to this tester class. For
$Z\in\{\varnothing,R\}$, write
\begin{equation}
  \mathcal{N}^{(K)}_Z
  =\min_{\Lambda\in\mathrm{CPD}_Z}
   d_{\mathrm{op}}^{(K)}\!\left(\Upsilon_{L|Z},\Lambda\right),
  \qquad
  \Upsilon_{L|\varnothing}\equiv\Upsilon_L,
  \qquad
  \mathrm{CPD}_\varnothing\equiv\mathrm{CPD}.
  \label{eq:orderK}
\end{equation}
The tester classes are nested, so each memory estimate $\mathcal{N}^{(K)}_Z$
increases with $K$ and converges to $\mathcal{N}_Z$. The saturation order
$k_\ast$ is the smallest resolution at which the value remains fixed within the
chosen tolerance. The order-resolved island observable
\begin{equation}
  \chi_{\mathrm{island}}^{(K)}
  =\mathcal{N}^{(K)}_\varnothing-\mathcal{N}^{(K)}_R
  \label{eq:chiK}
\end{equation}
converges to~\eqref{eq:chi} and becomes constant once both constituent memories
have saturated. Its first resolved order identifies the temporal scale at which
the early radiation begins to screen the hidden memory; that order can lie
below the saturation order $k_\ast$ and coincides with it when the memory
occupies a single temporal order. Likewise, $m^{(K)}_\epsilon(L|R)$ is obtained
from~\eqref{eq:mcost} by replacing $\mathcal{N}_{RM'}$ with
$\mathcal{N}^{(K)}_{RM'}$; increasing $K$ imposes a finer predictive
requirement on the auxiliary register.

The minimal instance of the transition sits at three time steps. A passive
three-bin experiment admits the conditional-mutual-information pair
\begin{equation}
  W^{(3)}_\varnothing
  =I\!\left(b_1:b_3\,\middle|\,b_2\right),
  \qquad
  W^{(3)}_R
  =I\!\left(b_1:b_3\,\middle|\,b_2 R\right).
  \label{eq:witness}
\end{equation}
These quantities test whether the middle bin, together with the available side
information, screens the first bin from the third. For passive
measure-and-prepare statistics with a classical conditioning register, the
relative-entropy distance to the nearest conditional product evaluates exactly
to these conditional mutual informations, and for general instrument families
they lower-bound the corresponding operational memories in the
type-II-exponent reading. Their difference,
\begin{equation}
  \chi_{\mathrm{island}}^{\mathrm{wit},(3)}
  =W^{(3)}_\varnothing-W^{(3)}_R,
  \label{eq:chiwitness}
\end{equation}
isolates the three-time correlation screened by $R$. A process can hold every
one- and two-time reduced process, as a multilinear functional on instrument
sequences, at a Markovian reference while carrying an order-one value of
$W^{(3)}_\varnothing$; reconstruction of the common-cause memory then
suppresses $W^{(3)}_R$, and the transition is resolved at $k_\ast=3$ with every
lower-order response pinned to the reference. This is the minimal form of the
Markov-order transition.

\section{Main results}
\label{sec:results}

The results below concern the evaporation processes of
Section~\ref{sec:conditioned} satisfying the interior-sufficiency condition
\eqref{eq:interior-sufficiency}, with additional assumptions stated in each
result. We use the process-tensor and quantum-Markov-order framework of
Refs.~\cite{Pollock,Taranto}. The discrimination interpretation follows
Refs.~\cite{CDP,GutoskiWatrous}, and the recovery estimates use
Refs.~\cite{FawziRenner,Petz,JungeUniversal}.

\begin{theorem}[Multi-time blindness]\label{thm:blindness}
There exist a one-parameter family of evaporation processes $\{\Upsilon_t\}$
and an integer $k_\ast\ge 3$ such that, across $t_{\mathrm{Page}}$,
\begin{enumerate}
  \item every single-bin marginal is thermal and unchanged,
        $\rho_{b_i}=\tau_\beta$;
  \item every marginal of order at most $k_\ast-1$ coincides with a fixed
        CP-divisible reference $\Lambda$, before and after the transition;
  \item the order-$k_\ast$ estimator $\chi_{\mathrm{island}}^{(k_\ast)}$ is
        $o(1)$ before $t_{\mathrm{Page}}$ and $\Theta(1)$ after.
\end{enumerate}
For this family $\chi_{\mathrm{island}}^{(K)}=0$ for $K<k_\ast$, and the memory
occupies the single order $k_\ast$, so
$\chi_{\mathrm{island}}^{(k_\ast)}=\chi_{\mathrm{island}}$. For this family,
probes spanning fewer than $k_\ast$ slots cannot distinguish the process from
$\Lambda$ and therefore do not detect the transition.
The minimal construction has $k_\ast=3$.
\end{theorem}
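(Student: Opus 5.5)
The proof is constructive. We exhibit the minimal family with $k_\ast=3$ and check the three items in turn.

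\textbf{Construction.} Take three late bins $b_1,b_2,b_3$, each a qubit. For a thermal state that is diagonal in the readout basis, pad each bin with a thermal environment; only a parity bit matters here. Let the interior memory $M$ carry a uniformly random bit $m$, and let the bins be emitted as three classical bits $x_1,x_2,x_3$ with $x_1,x_2$ i.i.d.\ uniform and $x_3=x_1\oplus x_2\oplus m$. The early register $R$ holds a copy of $m$ transmitted through a channel of fidelity $1-\epsilon_t$, where $\epsilon_t$ decreases from $\approx 1/2$ (before $t_{\mathrm{Page}}$) to $\approx 0$ (after). Concretely, $R=m\oplus e_t$ with $e_t\sim\mathrm{Bern}(\epsilon_t)$. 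This is a finite-dimensional Hayden--Preskill-type decoding toy: $R$ gradually acquires the interior parity.

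\textbf{Items 1 and 2.} Marginalizing over $m$ makes $x_3$ uniform and independent of any proper subset of $\{x_1,x_2\}$. Every single-bin state is therefore the fixed maximally mixed (or thermal-padded) $\tau_\beta$, independent of $t$. For item 2 we must check reduced processes as multilinear functionals, not just states. The emission is a classical source that does not respond to interventions: the bins are prepared by the black hole and the observer's instruments act after emission. So any reduced comb on $\le 2$ slots is determined by the joint state of those bins. Every pair $(x_i,x_j)$ is i.i.d.\ uniform, so this joint state is $\tau_\beta^{\otimes 2}$. It coincides with the factorized comb $\Lambda=\tau_\beta\otimes\mathcal{D}_{\tau_\beta}\otimes\mathcal{D}_{\tau_\beta}$ (replacement channels), which lies in $\mathrm{CPD}$. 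This holds for all $t$, since $R$ is traced out in $\Upsilon_L$. With $R$ retained, a $K\le 2$ tester sees either $(R,x_i)$ or $(R,x_i,x_j)$. These are again independent and uniform, because $m$ enters only through the full parity. Hence $\mathcal N^{(K)}_\varnothing=\mathcal N^{(K)}_R=0$ for $K<3$, and $\chi^{(K)}_{\mathrm{island}}=0$.

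\textbf{Item 3.} We use the witness \eqref{eq:witness}. For these passive classical statistics, the text states that the relative-entropy distance to the nearest conditional product equals the conditional mutual information. We compute:
\begin{equation}
W^{(3)}_\varnothing=I(x_1:x_3|x_2)=1\ \text{bit},\qquad W^{(3)}_R=I(x_1:x_3|x_2R)=h(\epsilon_t),
\end{equation}
where $h$ is the binary entropy. This gives $\chi^{(3)}_{\mathrm{island}}=1-h(\epsilon_t)$, which is $o(1)$ for $\epsilon_t\to1/2$ and $\Theta(1)$ for $\epsilon_t\to 0$.

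\textbf{Main obstacle.} The hardest step is upgrading the witness to the full operational memory $\mathcal N_Z$ and showing $\chi^{(3)}_{\mathrm{island}}=\chi_{\mathrm{island}}$. This needs two facts.

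First, adaptive or coherent testers cannot extract more than the classical CMI. I would argue this by data processing \eqref{eq:dopdpi}: the process is a classical source followed by dephasing, so every tester factors through a classical distribution. The optimal distance to $\mathrm{CPD}_Z$ is then attained by the conditional-product projection, which is the Markov-chain projection $p(x_1)p(x_2|x_1)p(x_3|x_2,z)$.

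Second, there is no higher order to saturate, since $n=3$. For $k_\ast>3$ the same construction with an $n$-fold parity $x_n=\bigoplus_{i<n}x_i\oplus m$ works, because any $n-1$ bits are i.i.d.\ uniform. A secondary subtlety is nonconvexity of $\mathrm{CPD}$: the minimizer must be a genuine factorized comb rather than a mixture. Here the uniform product comb is exactly factorized, so item 2 holds with equality.
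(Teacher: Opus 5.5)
Your construction produces the reverse of the transition, and both of your witness values are miscomputed.

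Because the interior bit $m$ is uniform and independent of $(x_1,x_2)$, it acts as a one-time pad on the parity. Once $M$ and $R$ are traced out, $x_3=x_1\oplus x_2\oplus m$ is independent of the pair $(x_1,x_2)$ jointly, not merely of each proper subset. The late-only bins are then three i.i.d.\ uniform bits, so $\Upsilon_L$ coincides with the Markov reference $\Lambda$ at every order, including order three, and $\mathcal N_\varnothing=0$ identically. Concretely, $W^{(3)}_\varnothing=H(x_3|x_2)-H(x_3|x_1x_2)=1-H(m)=0$, not $1$ bit.

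Conditioning on $R=m\oplus e_t$ partly removes the pad. Here $H(x_3|x_2R)=1$ while $H(x_3|x_1x_2R)=H(m|R)=h(\epsilon_t)$, so $W^{(3)}_R=1-h(\epsilon_t)$, not $h(\epsilon_t)$. Hence $\chi^{\mathrm{wit},(3)}_{\mathrm{island}}=h(\epsilon_t)-1$, which is $0$ before $t_{\mathrm{Page}}$ and tends to $-1$ after. In your model $R$ is a decryption key that \emph{exposes} an order-three correlation rather than screening one. This is exactly the competing effect of side information that the paper distinguishes from screening, and excludes in the bridge through the mediation assumption. Your order-$n$ extension $x_n=\bigoplus_{i<n}x_i\oplus m$ has the same defect.

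The missing idea is that the interior memory must be the common cause of the late correlation. Then the correlation survives marginalization over $M$ and is removed only by conditioning. This is the paper's construction: $b_1=X$, $b_2=Y$, $b_3=X\oplus Y$ with the parity fixed, and $R$ encodes the carried bit $X$. The paper uses pure states of overlap $s_\lambda$, but a noisy copy $R=X\oplus e_t$ works equally well. With the noisy copy, $W^{(3)}_\varnothing=1$ bit and $W^{(3)}_R=I(X:X\oplus Y|YR)=H(X|R)=h(\epsilon_t)$, so $\chi^{\mathrm{wit},(3)}_{\mathrm{island}}=1-h(\epsilon_t)$. These are exactly the formulas you wrote, now correctly derived. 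Your item 1 argument and your late-only item 2 argument (input-discarding preparations, pairwise uniformity) carry over unchanged.

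In the repaired model, however, $R$ is correlated with $b_1$ and with $b_2\oplus b_3=X$. So your claim that every $(R,x_i,x_j)$ is independent and uniform fails. The vanishing of the assisted memory below order three must then rest on $\mathrm{CPD}_R$ being allowed to condition on the static side register, not on independence.
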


\emph{Proof sketch.} The family is a marginal-matching construction whose
marginals of order at most $k_\ast-1$ coincide with those of a CP-divisible
$\Lambda$, while its order-$k_\ast$ conditional structure carries the
correlation that $M$ mediates among the late bins. Item~1 holds because each
bin is prepared in $\tau_\beta$. Item~2 holds by the marginal matching, which
fixes $\chi_{\mathrm{island}}^{(K)}=0$ for $K<k_\ast$. Item~3 holds in two
steps: before $t_{\mathrm{Page}}$ the early register carries $o(1)$ correlation
with $M$, so conditioning preserves the comb,
$\mathcal{N}_R^{(k_\ast)}\simeq\mathcal{N}_\varnothing^{(k_\ast)}$, and
$\chi_{\mathrm{island}}^{(k_\ast)}=o(1)$; after $t_{\mathrm{Page}}$,
Theorem~\ref{thm:bridge} at order $k_\ast$ gives
$\mathcal{N}_R^{(k_\ast)}\to 0$ while
$\mathcal{N}_\varnothing^{(k_\ast)}=\Theta(1)$, so
$\chi_{\mathrm{island}}^{(k_\ast)}=\Theta(1)$. Appendix~\ref{app:parity}
constructs the $k_\ast=3$ instance, a shared-parity comb, and
Appendix~\ref{app:endpoints} computes its endpoint values.

The construction realizes the superactivation of memory by combining time
steps reported in Ref.~\cite{priorpaper}, here at a definite order $k_\ast$
with the lower-order marginals fixed to a Markovian reference.

\begin{proposition}[Discrimination reading]\label{prop:opmeaning}
The tester distinguishability of two combs is
\begin{equation}
  d_{\mathrm{op}}(\Upsilon,\Lambda)=\sup_{T\in\mathcal{I}}\
    D\big(T[\Upsilon]\,\big\|\,T[\Lambda]\big),
  \label{eq:dopexplicit}
\end{equation}
the optimal performance of a tester from $\mathcal{I}$ in discriminating
$\Upsilon$ from $\Lambda$, with $D$ the base divergence whose figure of merit
is the optimal success probability or the type-II error exponent. The memory
$\mathcal{N}[\Upsilon]=\min_{\Lambda\in\mathrm{CPD}}d_{\mathrm{op}}(\Upsilon,\Lambda)$
is then the optimal distinguishability of $\Upsilon$ from its nearest Markovian
comb, and the order parameter
$\chi_{\mathrm{island}}=\mathcal{N}_\varnothing-\mathcal{N}_R$ is the
difference of two such optima, the distinguishability from the Markovian set
that conditioning on $R$ removes, equivalently the power of $R$ to Markovianize
the late process.
\end{proposition}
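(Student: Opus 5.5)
The argument is largely an unpacking of definitions, and I will do it in three steps: the tester level, the memory $\mathcal N$, and then $\chi_{\mathrm{island}}$.

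\textbf{Step 1: $d_{\mathrm{op}}$ as an optimal discrimination figure.} Fix a tester $T\in\mathcal I$. It is an adaptive sequence of admissible instruments followed by a final measurement on the record. By the link-product formula \eqref{eq:processtensor}, it maps any comb to a classical outcome distribution, linearly in the comb. So discriminating $\Upsilon$ from $\Lambda$ with $T$ is classical hypothesis testing between $T[\Upsilon]$ and $T[\Lambda]$.

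For $D$ the trace distance, Helstrom's theorem gives the equal-prior optimal success probability $\tfrac12\bigl(1+D(T[\Upsilon]\|T[\Lambda])\bigr)$. For $D$ the relative entropy, Stein's lemma gives the type-II exponent. In the Stein case, $n$-fold use is realised by parallel testers in $\mathcal I$ acting on $\Upsilon^{\otimes n}$, following \cite{CDP,GutoskiWatrous}. This requires $\mathcal I$ to be closed under parallel composition.

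Optimising over the strategy means taking the supremum over $T$, which is exactly \eqref{eq:dopexplicit}. The main subtlety here is the regularised Stein direction. With adaptive multi-copy testers the optimal exponent could exceed the single-shot supremum. I would either restrict $\mathcal I$ to the single-copy-i.i.d.\ tester class, as the statement's phrase ``under the corresponding tester class'' suggests, or define $d_{\mathrm{op}}$ regularised. Either way, I would make this convention explicit rather than prove additivity.

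\textbf{Step 2: $\mathcal N$ as distinguishability from the nearest Markovian comb.} By Step 1, for each fixed $\Lambda\in\mathrm{CPD}$ the number $d_{\mathrm{op}}(\Upsilon,\Lambda)$ is the optimal performance against that $\Lambda$. Taking $\min_\Lambda$, which is attained in the finite-dimensional families as stated after \eqref{eq:Ndef}, identifies $\mathcal N[\Upsilon]$ with the optimal distinguishability from the least distinguishable Markovian comb.

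I need to stress that this is a min–sup, a game in which the Markovian alternative is chosen first. It is not the distinguishability of $\Upsilon$ from the composite hypothesis ``some $\Lambda\in\mathrm{CPD}$''. That would be a sup–min, and since $\mathrm{CPD}$ is nonconvex there is no minimax theorem to identify the two. The proposition's wording, ``from its nearest Markovian comb'', matches the min–sup, so no exchange is needed.

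\textbf{Step 3: the order parameter.} I apply Step 2 twice: to $(\Upsilon_L,\mathrm{CPD})$ with testers in $\mathcal I$, and to $(\Upsilon_{L|R},\mathrm{CPD}_R)$ with testers that may preprocess $R$, as in \eqref{eq:conditioning}. Then $\mathcal N_\varnothing$ and $\mathcal N_R$ are two optima of the same type, and their difference is $\chi_{\mathrm{island}}$ by \eqref{eq:chi}.

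To justify the reading ``distinguishability removed by $R$'', I note two facts.
\begin{itemize}
\item Every late-only tester is an assisted tester that discards $R$.
\item Every $\Lambda\in\mathrm{CPD}$, tensored with the $R$-marginal, lies in $\mathrm{CPD}_R$.
\end{itemize}
So both quantities are measured on a common scale. A positive value means that conditioning on $R$ lowers the optimal distinguishability from Markovianity, which is the ``power of $R$ to Markovianize''.

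I would not claim $\chi_{\mathrm{island}}\ge 0$ in general, consistent with the remark after \eqref{eq:chi} that side information can also expose correlations. The interpretation is therefore stated as a signed difference of optima.
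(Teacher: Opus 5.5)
Your proposal is correct and follows essentially the same route as the paper's argument in Appendix~\ref{app:dual}. A fixed tester reduces comb discrimination to discrimination of its outputs, Helstrom and Stein give the success-probability and type-II-exponent readings, the minimum over the nonconvex class $\mathrm{CPD}$ is taken with no minimax exchange, and $\chi_{\mathrm{island}}$ is read as a signed difference of two independently optimized optima; your explicit single-copy convention for the Stein direction is a sound sharpening of a point the paper covers only with the phrase ``available to the tester family.''
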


\emph{Proof sketch.} Equation~\eqref{eq:dopexplicit} is the tester-based
divergence of Section~\ref{sec:framework}, so the supremum is the optimal
distinguishing performance against a fixed Markovian comb and the minimum over
$\mathrm{CPD}$ selects the nearest one. The order parameter compares the two
optima, for $\Upsilon_L$ and for $\Upsilon_{L|R}$; each nearest comb is a
separate optimizer, so the difference of the two optima is the precise
statement. Appendix~\ref{app:dual} records the figure of merit and this
difference.

This discrimination interpretation concerns the observer's ability to
distinguish processes; it does not imply reconstruction of the interior state.

\begin{theorem}[Forward bridge]\label{thm:bridge}
Assume:
\begin{enumerate}[label=\textup{(A\arabic*)},leftmargin=*]
  \item the interior Markovianizes the late process,
  $\mathcal{N}_M(L)\le\delta$;
  \item a recovery map $\mathcal{R}_{R\to M}$ reconstructs the interior from
  the early radiation on a code subspace $C$ with infidelity at most
  $\epsilon(C)$,
  \begin{equation}
  \begin{aligned}
    1-F\big(&\rho_{ML},
    (\mathcal{R}_{R\to M}\otimes\mathrm{id}_L)(\rho_{RL})\big)\\
    &\le\epsilon(C)
    \qquad\text{for all } \rho\in C.
  \end{aligned}
    \label{eq:A2}
  \end{equation}
\end{enumerate}
Then for states in $C$,
\begin{equation}
  \mathcal{N}_R(L)\big|_C\ \le\ \delta+\omega_{\mathcal{I},n}\big(\epsilon(C)\big),
  \label{eq:T3bound}
\end{equation}
where $\omega_{\mathcal{I},n}$ is the continuity modulus of $d_{\mathrm{op}}$
for the instrument family $\mathcal{I}$ on $n$ bins,
$\omega_{\mathcal{I},n}(0)=0$. For the infidelity of~\eqref{eq:A2},
$\omega_{\mathcal{I},n}(\epsilon)=c_{\mathcal{I},n}\sqrt{\epsilon}$ with
$c_{\mathcal{I},n}=O(n)$; equivalently
$\chi_{\mathrm{island}}\big|_C\ge\mathcal{N}_\varnothing-\delta
 -c_{\mathcal{I},n}\sqrt{\epsilon(C)}$.
\end{theorem}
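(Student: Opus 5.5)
The plan is a triangle inequality for the operational distance $d_{\mathrm{op}}$, combined with data processing~\eqref{eq:dopdpi} applied to the recovery map. The argument transports a near-optimal Markovian reference for the $M$-assisted process into the $R$-assisted setting.

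\emph{Step 1: the $M$-assisted reference.} Start from (A1). It supplies a comb $\Lambda_M\in\mathrm{CPD}_M$ with
\[
  d_{\mathrm{op}}(\Upsilon_{L|M},\Lambda_M)\le\delta .
\]
The minimum is attained in the finite-dimensional setting, or one can take $\delta+\eta$ and let $\eta\to0$.

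\emph{Step 2: pull back along the recovery map.} Consider the comb-to-comb channel $\mathcal{E}$ that precomposes the initial side leg with $\mathcal{R}_{R\to M}$. Any tester that processes $R$ first and then acts as an $M$-assisted tester is itself an admissible $R$-assisted tester, since its first operation on $R$ is just $\mathcal{R}_{R\to M}$ followed by the original processing. Two facts need checking.
\begin{enumerate}
  \item $\mathcal{E}$ maps $\mathrm{CPD}_M$ into $\mathrm{CPD}_R$. A static side register processed by a CPTP map before the first late slot still yields late legs that factorize once the side information is supplied, so $\mathcal{E}[\Lambda_M]\in\mathrm{CPD}_R$.
  \item Data processing gives
  \[
    d_{\mathrm{op}}\big(\mathcal{E}[\Upsilon_{L|M}],\mathcal{E}[\Lambda_M]\big)\le\delta .
  \]
\end{enumerate}

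\emph{Step 3: compare $\mathcal{E}[\Upsilon_{L|M}]$ with $\Upsilon_{L|R}$.} Here $\mathcal{E}[\Upsilon_{L|M}]$ denotes the comb generated from $(\mathcal{R}\otimes\mathrm{id}_L)(\rho_{RL})$ evolved by the late dynamics, while $\Upsilon_{L|M}$ is generated from $\rho_{ML}$. (A2) says these initial states are $\epsilon(C)$-close in fidelity. Fuchs--van de Graaf converts this to trace distance at most $\sqrt{\epsilon}$. Both combs then apply the same late dynamics and the same tester. The Markovian reference must be built from the recovered side register, so the correct order is: first bound
\[
  d_{\mathrm{op}}\big(\Upsilon_{L|R},\,\Upsilon'_{L|M}\big)\le\omega\big(\epsilon(C)\big),
\]
where $\Upsilon'_{L|M}$ is the comb seeded by the recovered state. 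Then use the triangle inequality against $\Lambda_M$ composed with $\mathcal{R}$, obtaining
\[
  \mathcal{N}_R\le\delta+\omega\big(\epsilon(C)\big).
\]
The step from a state-level trace distance to a comb-level distance is data processing once more: fixing the late dynamics and the tester gives a CPTP map from the initial state to the outcome distribution. For trace distance this yields $\omega(\epsilon)\le\sqrt{\epsilon}$.

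\emph{Main obstacle: the $O(n)$ constant.} I expect the difficulty to lie in justifying $c_{\mathcal{I},n}=O(n)$ and in handling the code subspace. The fidelity in (A2) is stated per state in $C$. The late dynamics, however, may interleave fresh interior degrees of freedom with each slot, so the closeness may have to be propagated slot by slot. I would try a hybrid (telescoping) argument: replace the true interior-to-bin link by the recovered one at one slot at a time, paying $\sqrt{\epsilon}$ per slot through the triangle inequality. This gives $c_{\mathcal{I},n}\le n$. For a relative-entropy base divergence, the continuity modulus would instead come from Alicki--Fannes--Winter type bounds restricted to testers with finite outcome alphabets, with dimension factors absorbed into $c_{\mathcal{I},n}$.

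\emph{Conclusion.} The bound on $\chi_{\mathrm{island}}|_C$ follows by substituting the estimate for $\mathcal{N}_R$ into~\eqref{eq:chi}.
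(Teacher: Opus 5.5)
There is a genuine gap, and it is in Step 3. The inequality $d_{\mathrm{op}}(\Upsilon_{L|R},\Upsilon'_{L|M})\le\omega(\epsilon(C))$ is not what (A2) gives you. Fuchs--van de Graaf plus data processing through the fixed dilation bounds $d_{\mathrm{op}}(\Upsilon_{L|M},\Upsilon'_{L|M})$. That compares two combs that both carry an $M$-type side leg, and in $\Upsilon'_{L|M}$ the register $R$ has already been consumed by $\mathcal R_{R\to M}$.

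But $\mathcal N_R$ is a supremum over \emph{all} $R$-assisted testers, including those that use whatever part of $R$ the recovery map throws away. Your Step 2 remark, that testers routing $R$ through $\mathcal R_{R\to M}$ are admissible $R$-assisted testers, runs in the wrong direction. It yields $d_{\mathrm{op}}(\mathcal R[\Upsilon_{L|R}],\mathcal R[\Lambda_R])\le d_{\mathrm{op}}(\Upsilon_{L|R},\Lambda_R)$, which is a lower bound on the assisted distance, whereas you need an upper bound. Likewise, $\mathcal R_{R\to M}$ maps $R$-side combs to $M$-side combs. Applying your $\mathcal E$ to $\Upsilon_{L|M}$ cannot produce a comb that still carries $R$, so nothing in your argument links $\Upsilon_{L|R}$ to the recovered-seeded comb.

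This is not a technicality you can repair inside (A1)--(A2): the inequality is false under those two hypotheses alone. Take $M$ trivial, let $b_1,b_2$ be independent fair bits emitted by measure-and-prepare, and let $R$ hold a classical copy of $b_1\oplus b_2$.
\begin{itemize}
\item (A1) holds with $\delta=0$, since $\mathcal N_M(L)=\mathcal N_\varnothing(L)=0$.
\item (A2) holds with $\epsilon(C)=0$, since $\mathcal R_{R\to M}$ just traces out $R$.
\item Yet $I(b_1:b_2\mid R)=1$ bit. A tester that reads $R$, measures $b_1$, and re-prepares before slot 2 separates $\Upsilon_{L|R}$ from every element of $\mathrm{CPD}_R$ by an order-one amount.
\end{itemize}
The paper makes exactly this remark at the end of its chain.

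The paper's proof in Appendix~\ref{app:constant} closes the gap with an extra mediation hypothesis A0. This is a fixed side-leg channel $\mathcal G_{M\to MR}$ with $d_{\mathrm{op}}(\Upsilon_{L|MR},\mathcal G_{M\to MR}[\Upsilon_{L|M}])\le\delta_{\mathrm{med}}$, i.e.\ $R$ carries no late-process correlation that is not routed through $M$. That channel is what lets you compare $\Upsilon_{L|R}$ with the $R$-side reference built from $\Lambda_M$ via $\mathcal R_{R\to M}$ and $\mathcal G_{M\to MR}$, at cost $\delta_{\mathrm{med}}$. The bound then reads $\delta_{\mathrm{med}}+\delta_M+\omega_{\mathcal I,n}(\epsilon(C))$, so the $\delta$ in the statement is $\delta_{\mathrm{med}}+\delta_M$.

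Once A0 is inserted at your Step 3, the rest of your skeleton is the paper's argument: a near-optimal $\Lambda_M$ with $\eta\to0$, lifting it into $\mathrm{CPD}_R$, Fuchs--van de Graaf, data processing, and the one-link-at-a-time hybrid for the $O(n)$ constant.

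One minor correction. With the root fidelity $F=\|\sqrt\rho\sqrt\sigma\|_1$ used in the paper, $1-F\le\epsilon$ gives $\tfrac12\|\rho-\sigma\|_1\le\sqrt{1-F^2}\le\sqrt{2\epsilon}$. The constants are therefore $\sqrt2$ for a single insertion and $\sqrt2\,n$ for the hybrid, not $1$ and $n$.
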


\emph{Proof sketch.} Compose the recovery $\mathcal{R}_{R\to M}$ with the
$M$-conditioned process. By A2 and the Fuchs and van de Graaf inequalities the
recovered conditioning state differs from the true one in trace distance by at
most $\sqrt{2\,\epsilon(C)}$, and data processing for $d_{\mathrm{op}}$
propagates this through the $n$ late bins, so replacing $M$ by its
reconstruction costs $c_{\mathcal{I},n}\sqrt{\epsilon(C)}$ with
$c_{\mathcal{I},n}=O(n)$. Assumption A1 supplies $\delta$, and the triangle
inequality gives~\eqref{eq:T3bound}. A trace-norm or diamond-norm definition of
the reconstruction error replaces the modulus by the linear
$O(n\,\epsilon(C))$. Appendix~\ref{app:constant} fixes the constant.

The bound applies on the code subspace $C$ where the stated reconstruction
error $\epsilon(C)$ is controlled; it does not require recovery outside $C$.

\begin{theorem}[Partial converse]\label{thm:converse}
Suppose the late process has a minimal Markov dilation $M_{\min}$, unique up to
predictive equivalence \textup{(B1)}; every future statistic accessible to
$\mathcal{I}$ factors through $M_{\min}$, with no independent side channels
\textup{(B2)}; and $R$ Markovianizes the late process in the tester norm,
$\mathcal{N}_R(L)\le\epsilon$ \textup{(B3)}. Then there is a simulator
$\mathcal{R}_{R\to M_{\min}}$ reproducing the influence of $M_{\min}$ on the
future late-radiation statistics,
\begin{equation}
  \big\|\,\Phi_{\mathrm{fut}\mid M_{\min}}\circ\mathcal{R}_{R\to M_{\min}}
    -\Phi_{\mathrm{fut}\mid R}\,\big\|_{\mathcal{I}_{\mathrm{fut}}}\ \le\
  g(\epsilon),
  \qquad g(\epsilon)=O(\sqrt{\epsilon}),
  \label{eq:T4bound}
\end{equation}
with $\Phi_{\mathrm{fut}\mid\cdot}$ the predictive channel to the future late
bins and $\|\cdot\|_{\mathcal{I}_{\mathrm{fut}}}$ the distinguishability under
future testers from $\mathcal{I}$.
\end{theorem}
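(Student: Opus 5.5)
The plan is to build the simulator from the near-optimal $R$-conditioned Markovian comb. By (B3) there is $\Lambda^\star\in\mathrm{CPD}_R$ with $d_{\mathrm{op}}(\Upsilon_{L|R},\Lambda^\star)\le\epsilon$. A comb in $\mathrm{CPD}_R$ has the factorized form \eqref{eq:markovfactorization} once $R$ is supplied on the initial leg. Its only route from the past (the side register and the first late emissions) to the future late bins is therefore a state on the present output leg, prepared by a channel acting on $R$. I will write this predictive channel as $\Phi^{\Lambda^\star}_{\mathrm{fut}\mid R}$. First, apply data processing \eqref{eq:dopdpi} to the comb-to-comb channel that restricts attention to future testers in $\mathcal{I}_{\mathrm{fut}}$. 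This gives $\|\Phi_{\mathrm{fut}\mid R}-\Phi^{\Lambda^\star}_{\mathrm{fut}\mid R}\|_{\mathcal{I}_{\mathrm{fut}}}\le\epsilon$, measured in the base divergence. If $D$ is the relative entropy, Pinsker's inequality converts this to a trace-type bound of order $\sqrt{\epsilon}$. This conversion is where $g(\epsilon)=O(\sqrt{\epsilon})$ will come from.

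Second, use (B1) and (B2) to show that $\Lambda^\star$ itself factors through $M_{\min}$. By (B2), every future statistic accessible to $\mathcal{I}$ is a function of the predictive state on $M_{\min}$, so $\Phi_{\mathrm{fut}\mid R}=\Phi_{\mathrm{fut}\mid M_{\min}}\circ\mathcal{S}$. Here $\mathcal{S}$ is the map from the assisted initial leg to $M_{\min}$ induced by the minimal dilation. The comb $\Lambda^\star$ is Markovian with $R$ as its memory, so it defines a Markov dilation of an $\epsilon$-close process. I will argue that minimality in (B1) forces any such dilation to refine $M_{\min}$ up to predictive equivalence. Concretely, the equivalence classes of $R$-states under future testers form a quotient, and the minimal memory is that quotient. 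This yields a map $\mathcal{R}_{R\to M_{\min}}$ with $\Phi_{\mathrm{fut}\mid M_{\min}}\circ\mathcal{R}_{R\to M_{\min}}$ equal to $\Phi^{\Lambda^\star}_{\mathrm{fut}\mid R}$, up to an error controlled by the first step. The triangle inequality then gives \eqref{eq:T4bound}.

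The main obstacle is the second step. The predictive quotient defined by $\Lambda^\star$ is a priori only approximately equal to that of the true process. I therefore need the quotient map to be a legitimate CPTP channel into $M_{\min}$, and I need it to be stable under $\epsilon$-perturbations. My first approach is to define $\mathcal{R}$ directly as a measure-and-prepare or Petz-type map, in the spirit of \cite{Petz,JungeUniversal}. It would be built from the structure of the dilation: recover $M_{\min}$ from the output of $\Lambda^\star$ restricted to the informationally complete future tester family. A fidelity-to-trace conversion (Fuchs--van de Graaf, as in the proof of Theorem~\ref{thm:bridge}) keeps the loss at $O(\sqrt{\epsilon})$. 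If exact linearity fails, I will fall back on defining $\mathcal{R}$ as a minimizer of the future-tester distance over channels $R\to M_{\min}$. Its existence follows from compactness in finite dimension, and its value is bounded by the explicit candidate above. Uniqueness up to predictive equivalence from (B1) ensures that this minimizer reproduces $\Phi_{\mathrm{fut}\mid R}$ rather than an inequivalent memory.
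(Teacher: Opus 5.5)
\textbf{Genuine gap.} Your second step is the whole content of the theorem, and it is not closed.

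\emph{The map $\mathcal S$ is circular.} You write $\Phi_{\mathrm{fut}\mid R}=\Phi_{\mathrm{fut}\mid M_{\min}}\circ\mathcal S$ with $\mathcal S$ ``induced by the minimal dilation.'' The dilation supplies no channel acting on $R$ alone: $R$ and $M_{\min}$ are related only through correlations in the global state. If $\mathcal S$ were such a channel, you would already have the simulator with zero error, and (B3) would never be used. So the step either assumes the conclusion, or $\mathcal S$ acts on more than $R$ and is not an admissible simulator.

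\emph{The refinement claim does not survive perturbation.} You identify $M_{\min}$ with the predictive quotient of $R$-states. Minimality forcing a refinement is a statement about exact Markov dilations of the true process. $\Lambda^\star$ is an exact dilation only of an $\epsilon$-close process. Predictive quotients are not continuous under such perturbations: classes can split or merge. Minimality therefore gives you no quantitative map $R\to M_{\min}$.

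\emph{The fallback is also circular.} Compactness gives a minimizer over channels $R\to M_{\min}$. Bounding its value still requires an explicit $O(\sqrt\epsilon)$ candidate, which you never construct. A Petz or Fawzi--Renner recovery would need a conditional-mutual-information hypothesis linking $R$ to $M_{\min}$, and you have not derived one from (B3).

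\emph{A smaller point in step one.} Data processing on $d_{\mathrm{op}}(\Upsilon_{L|R},\Lambda^\star)$ controls the predictive channels only on the actual state of $R$. It does not give a channel-norm bound over all inputs.

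\textbf{How the paper proceeds.} It avoids any explicit construction and argues by duality. It defines the deficiency $\Delta_{\mathcal I}(R\to M_{\min})=\inf_{\mathcal R}\|\Phi_{\mathrm{fut}\mid M_{\min}}\circ\mathcal R-\Phi_{\mathrm{fut}\mid R}\|_{\mathcal I_{\mathrm{fut}}}$. The simulable predictive channels form a compact convex set in the finite-dimensional quotient $\mathsf V_{\mathrm{pred}}$. There, informational completeness makes the future-tester seminorm a norm, with equivalence constant $\kappa_{\mathcal I}(d_{M_{\min}},n_{\mathrm{fut}})$.

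The paper then argues contrapositively. If every simulator fails by more than $\eta$, Hahn--Banach separation yields a future decision functional with advantage at least $\eta/\kappa_{\mathcal I}$. (B1) and (B2) are used exactly here: they turn this functional into a centered witness $W$ that vanishes on every $\Lambda_R\in\mathrm{CPD}_R$. The witness therefore lower-bounds the distance to the whole Markov class, not just to one chosen reference $\Lambda^\star$.

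A Pinsker-type quadratic bound then gives $\mathcal N_R(L)\ge\alpha_{\mathcal I}\Delta^2/\kappa_{\mathcal I}^2$, hence $\Delta\le\kappa_{\mathcal I}\sqrt{\epsilon/\alpha_{\mathcal I}}$. The minimizing channel is the simulator. Your source of the square root (Pinsker on the relative-entropy base divergence) agrees with this.

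\textbf{The missing idea.} Convert ``no good simulator'' into a memory witness that annihilates $\mathrm{CPD}_R$, rather than trying to read a simulator off a near-optimal Markov reference. This route also produces the constant's dependence on $d_{M_{\min}}$ and the future horizon, which your argument does not track.
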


\emph{Proof sketch.} By B1 and B2 every future-affecting degree of freedom
factors through $M_{\min}$, so reproducing its action on the future reproduces
the full late-time memory. By B3, conditioning on $R$ brings the late process
within $\epsilon$ of Markovian in the tester norm, so $R$ is an approximate
sufficient statistic for $M_{\min}$, and approximate sufficiency yields a
channel $\mathcal{R}_{R\to M_{\min}}$ satisfying~\eqref{eq:T4bound}.
Appendix~\ref{app:simulator} gives the construction and the square-root
estimate.

Under B1--B3, the simulator reproduces the action of $M_{\min}$ on future
statistics, rather than necessarily reconstructing the microscopic interior
state.

\begin{proposition}[Memory-cost collapse]\label{prop:memcost}
The conditional memory cost is the minimal sufficient memory before the Page
time and collapses after,
\begin{equation}
  m_\epsilon(L\mid R)\simeq\log\dim M_{\min}\ \ (t<t_{\mathrm{Page}}),
  \qquad
  m_\epsilon(L\mid R)=0\ \ (t>t_{\mathrm{Page}}),
  \label{eq:mcollapse}
\end{equation}
the post-Page value holding once $\epsilon(C)$ is below the tolerance
$\epsilon$, with $m_0(L\mid R)=0$ in the exact-recovery limit. In the faithful
instance, where $M_{\min}=M$ with no coarse-graining,
$m_\epsilon(L\mid R)\simeq\log\dim M$ before $t_{\mathrm{Page}}$ and the
trivial register suffices after.
\end{proposition}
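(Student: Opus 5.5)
The plan is to bound $m_\epsilon(L\mid R)$ from above and from below separately in each regime. The after-Page collapse will come from Theorem~\ref{thm:bridge}. The before-Page value will come from a lower bound built on the minimality of $M_{\min}$, together with the converse logic of Theorem~\ref{thm:converse}.

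\emph{Post-Page collapse.} By the definition~\eqref{eq:mcost}, it suffices to exhibit the trivial register $M'=\mathbb{C}$ and to check $\mathcal{N}_R(L)\le\epsilon$. Theorem~\ref{thm:bridge} gives
\[
\mathcal{N}_R(L)|_C\le\delta+c_{\mathcal{I},n}\sqrt{\epsilon(C)}.
\]
Read the hypothesis ``$\epsilon(C)$ below the tolerance'' as the statement that this right-hand side is at most $\epsilon$, with $\delta$ absorbed into the tolerance. Then $\log\dim M'=0$ is admissible, and $m_\epsilon(L\mid R)=0$. In the exact-recovery limit we have $\epsilon(C)=0$, and, in the faithful instance, $\delta=0$. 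Then $\mathcal{N}_R=0$, which gives $m_0=0$.

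\emph{Pre-Page value, upper bound.} Take $M'=M_{\min}$. By (A1), applied to the minimal dilation (or directly from $\mathcal{N}_M\le\delta$ when $M_{\min}=M$), supplying $M_{\min}$ Markovianizes $L$ to accuracy $\delta\le\epsilon$. Adding $R$ cannot hurt here. The assisted reference class $\mathrm{CPD}_{RM'}$ contains the combs that ignore $R$, and before $t_{\mathrm{Page}}$ the register $R$ is nearly uncorrelated with $M$. The extra tester power from $R$ therefore adds at most a continuity term of order $\sqrt{I(R:ML)}=o(1)$, via Pinsker and data processing~\eqref{eq:dopdpi}. So $m_\epsilon\le\log\dim M_{\min}+o(1)$.

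\emph{Pre-Page value, lower bound.} This is the main obstacle. Suppose some $M'$ with $\log\dim M'<\log\dim M_{\min}-\eta$ satisfies $\mathcal{N}_{RM'}(L)\le\epsilon$. I would run the argument of Theorem~\ref{thm:converse} with $RM'$ in place of $R$. Under (B1)--(B3), this yields a simulator $RM'\to M_{\min}$ that reproduces the future statistics up to $g(\epsilon)$. Before the Page time $R$ is approximately independent of $M_{\min}$. Hence the predictive information that $RM'$ must carry about $M_{\min}$ has to sit almost entirely in $M'$. Minimality (B1), in its approximate form, says that any register reproducing the future action of $M_{\min}$ to accuracy $g(\epsilon)$ has dimension at least $\dim M_{\min}$, up to a continuity correction. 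I would try to make this quantitative through a Fannes-type or smoothed-entropy bound on the predictive information $I(M_{\min}:\mathrm{fut})$. The bound would be $\log\dim M'\ge I(M':\mathrm{fut}\mid R)\ge I(M_{\min}:\mathrm{fut})-O(g(\epsilon)\log\dim M_{\min})-o(1)$. This works when the predictive information of $M_{\min}$ saturates its dimension, as in the parity and Hayden--Preskill combs of Section~\ref{sec:model}. This gives the ``$\simeq$'', and the faithful instance $M_{\min}=M$ gives $\log\dim M$. If saturation fails in general, I would state the lower bound in terms of that predictive information and check saturation explicitly for the model family.
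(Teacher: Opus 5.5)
Your post-Page step and your pre-Page upper bound follow the paper. The trivial register is certified through Theorem~\ref{thm:bridge}, and supplying the minimal dilation certifies $\log\dim M_{\min}$ from above. You are right that the operative condition is $\delta+c_{\mathcal I,n}\sqrt{\epsilon(C)}\le\epsilon$ rather than merely $\epsilon(C)<\epsilon$. Note, though, that $m_0=0$ rests on $\delta_{\mathrm{model}}=0$ for the measure-and-prepare constructions, not on faithfulness $M_{\min}=M$.

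\textbf{The gap is in your general pre-Page lower bound.} Three steps fail or fall short.

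\emph{The converse theorem does not force information content.} Theorem~\ref{thm:converse} cannot supply the step that $RM'$ ``must carry'' the predictive information about $M_{\min}$. The bound \eqref{eq:T4bound} compares two predictive channels out of the same register. A register independent of the past satisfies it with zero error (let the simulator prepare the average state of $M_{\min}$) while Markovianizing nothing. The size constraint has to come from the screening hypothesis itself. In the passive relative-entropy reading, $\mathcal N_{RM'}(L)\le\epsilon$ gives $I(\mathrm{past}:\mathrm{fut}\mid RM')\le\epsilon$, with the present bins included in the conditioning. The chain rule then gives $I(M':\mathrm{fut}\mid R)\ge I(\mathrm{past}:\mathrm{fut})-I(R:\mathrm{fut})-\epsilon$.

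\emph{Your dimension inequality fails for quantum registers.} The bound $\log\dim M'\ge I(M':\mathrm{fut}\mid R)$ holds for classical $M'$ or pure $R$, but not in general. Let $R$ be half of a Bell pair and $M'$ the other half, Pauli-encoded by a two-bit record that the future bins copy. Then $R$ and $M'$ are each independent of the record, while $RM'$ determines it, so $I(M':\mathrm{fut}\mid R)=2\log_2\dim M'$. For a mixed pre-Page $R$, as in actual evaporation, the valid bound $2\log\dim M'\ge I(M':\mathrm{fut}\mid R)$ certifies only half of $\log\dim M_{\min}$. This example shows that halving can be tight unless the extensions admitted in \eqref{eq:mcost} are restricted.

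\emph{You bound the wrong quantity.} As you note yourself, the chain bounds predictive information, which can fall short of $\log\dim M_{\min}$.

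\textbf{How the paper proceeds.} The paper avoids all three issues by proving no general lower bound. Appendix~\ref{app:screening} treats only the $k_\ast=3$ parity instance. Before the decoding threshold, $R$ is independent of $X$, so the trivial register leaves $W^{(3)}_R=H(X\mid R)=1$ bit, above any tolerance $\epsilon<1$. The retained bit $X$ screens exactly, $I(b_1:b_3\mid b_2,X)=0$. Because $\log_2\dim M'$ takes no value strictly between $0$ and $1$, this pins $m_\epsilon(L\mid R)=1$ bit. The common-cause values $\log_2 d_M$ are reported only as upper bounds.

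\textbf{How to repair your route.} Run your information-theoretic argument directly from screening in the model, where the pre-Page $R$ is pure and the record is classical. This repairs the proof and slightly improves on the paper. Take bins copying a uniform record $z$ of $d_M$ values. The screening condition is $I(b_1:b_2\mid RM')=H(z\mid RM')\le\epsilon$. Combined with the Holevo bound $I(z:M')\le\log\dim M'$, it gives $\log\dim M'\ge\log d_M-\epsilon$. This turns the paper's common-cause upper bounds into two-sided estimates.
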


The screening register of
the $k_\ast=3$ instance is identified in Appendix~\ref{app:screening}.

\section{Minimal evaporation-comb model}
\label{sec:model}

A single finite-dimensional family realizes the two ingredients of the
transition: screening of a common-cause memory by the early radiation and
localization of that memory at a definite temporal order. The first ingredient
is transparent in a two-bin screening core, where all quantities are available
in closed form. The second is supplied by a three-bin parity extension whose
one- and two-time marginals coincide with a fixed Markovian reference. Both
belong to the evaporation-comb family
\begin{equation}
  \mathsf{E}_{\lambda,C}
  =\Big(
      \rho_{AM},\{\mathcal{E}_i\}_{i=1}^{n},
      \mathcal{R}_{R\to\widehat M}(\lambda),\mathcal{I},C
    \Big),
  \label{eq:model-family}
\end{equation}
where $M$ is the interior memory, $A$ is an external reference that keeps it
mixed, $\mathcal{E}_i$ emits the late bin $b_i$, and
$\mathcal{R}_{R\to\widehat M}$ decodes from the early radiation a predictive
memory $\widehat M$. The coordinate $\lambda\in[0,1]$ orders the recovery path:
$\lambda=0$ supplies a trivial decoder, while $\lambda=1$ supplies an exact
simulator of the part of $M$ that controls the future late statistics. Its
relation to the reconstruction error is fixed below by the growth of the early
register.

The causal structure of the family is explicit in the dilation of each
emission step,
\begin{equation}
  U_i:\ M_{i-1}\otimes e_i
  \longrightarrow M_i\otimes b_i\otimes f_i,
  \qquad
  \mathcal{E}_i(\rho_{M_{i-1}})
  =\mathrm{Tr}_{f_i}\!\left[
       U_i(\rho_{M_{i-1}}\otimes\tau_{e_i})U_i^\dagger
    \right],
  \label{eq:model-emission}
\end{equation}
with a fresh seed $e_i$, a discarded environment $f_i$, and the predictive
memory $M_i$ carried to the next step. Once the value of the carried memory is
supplied, the remaining randomness is fresh at every slot. The resulting comb
therefore obeys
\begin{equation}
  \mathcal{N}_M(L)\leq\delta_{\mathrm{model}},
  \label{eq:model-sufficiency}
\end{equation}
with $\delta_{\mathrm{model}}=0$ for the measure-and-prepare constructions
below. Controlled emission gives the solvable classical core. Coherent
partial-SWAP and scrambling unitaries provide deformations of the same causal
structure whose resolved memory depends on the instrument family.

For the two-bin screening core, let $d_M=2$ and initialize $A$ and $M$ in a
Bell pair. The predictive record is
the computational-basis value $z\in\{0,1\}$. Controlled emission writes this
value into two late bins, while the early register carries one of two pure
states whose overlap decreases along the recovery path:
\begin{align}
  |\Psi_\lambda\rangle
  &=\frac{1}{\sqrt{2}}
    \sum_{z=0}^{1}
    |z\rangle_A|z\rangle_M|r_z(\lambda)\rangle_R
    |z\rangle_{b_1}|z\rangle_{b_2},
  \label{eq:screening-state}\\
  |r_0(\lambda)\rangle&=|0\rangle,
  \qquad
  |r_1(\lambda)\rangle
  =\cos\!\left(\frac{\pi\lambda}{2}\right)|0\rangle
   +\sin\!\left(\frac{\pi\lambda}{2}\right)|1\rangle.
  \label{eq:recovery-states}
\end{align}
The late-only observer sees the same perfectly correlated pair at every
$\lambda$; the assisted observer distinguishes the two values of the
common-cause record with an error that decreases along the recovery path.
Writing
$s_\lambda=|\langle r_0(\lambda)|r_1(\lambda)\rangle|
=\cos(\pi\lambda/2)$, the conditional-mutual-information witness is
\begin{align}
  \mathcal{N}^{\mathrm{wit}}_\varnothing
  &=I(b_1:b_2)=1,
  \nonumber\\
  \mathcal{N}^{\mathrm{wit}}_R(\lambda)
  &=I(b_1:b_2\mid R)
    =1-h_2\!\left(\frac{1+s_\lambda}{2}\right),
  \label{eq:screening-closed}\\
  \chi^{\mathrm{wit}}_{\mathrm{island}}(\lambda)
  &=h_2\!\left(\frac{1+s_\lambda}{2}\right),
  \label{eq:screening-chi}
\end{align}
where $h_2$ is the binary entropy in bits. At $\lambda=0$ the two states of
$R$ coincide, so conditioning preserves the full one-bit memory. At
$\lambda=1$ they are orthogonal, $R$ determines the predictive record, and the
conditioned memory vanishes. Every single-bin marginal remains at the thermal
value of the model, the maximally mixed state with $S(b_i)=1$, throughout the
interpolation.

An erasure decoder gives the degradation-ordered version of the same path. It
reveals the record with probability $\lambda$ and returns an erasure flag
otherwise, so
\begin{equation}
  \mathcal{N}^{\mathrm{wit}}_R(\lambda)=1-\lambda,
  \qquad
  \chi^{\mathrm{wit}}_{\mathrm{island}}(\lambda)=\lambda.
  \label{eq:erasure-path}
\end{equation}
The decoder at smaller $\lambda$ is obtained from the larger-$\lambda$ decoder
by additional erasure noise. Data processing therefore fixes the monotone
ordering of the assisted witness along the path. The coherent path in
\eqref{eq:screening-closed} supplies a smooth interpolation with the same
endpoints.

The screening core isolates the conditioning mechanism, while the minimal
Markov-order separation requires three late slots. Let $X$ and $Y$ be
independent uniform bits. A causal one-bit dilation retains $X$ after the first
emission, draws $Y$ as the fresh seed at the second step, and updates the
carried memory to $X\oplus Y$ before the third emission:
\begin{equation}
  b_1=X,
  \qquad
  b_2=Y,
  \qquad
  b_3=X\oplus Y.
  \label{eq:parity-process}
\end{equation}
Equivalently,
\begin{equation}
  p(b_1,b_2,b_3)
  =\frac{1}{4}\,
    \mathbf{1}\!\left[b_1\oplus b_2\oplus b_3=0\right].
  \label{eq:parity-distribution}
\end{equation}
Every one-bin marginal and every two-bin marginal is the uniform product, so
all responses below order three coincide with the fixed measure-and-prepare
Markovian reference that prepares an independent fair bit at each slot. The
three-time witness is nevertheless
\begin{equation}
  W^{(3)}_\varnothing
  =I(b_1:b_3\mid b_2)=1\ \mathrm{bit}.
  \label{eq:parity-witness}
\end{equation}
Let the early register encode the retained predictive bit $X$ through the
states~\eqref{eq:recovery-states}. Since $b_2=Y$ is already the conditioning
bin,
\begin{align}
  W^{(3)}_R(\lambda)
  &=I(X:X\oplus Y\mid YR)
    =H(X\mid R)
    =1-h_2\!\left(\frac{1+s_\lambda}{2}\right),
  \label{eq:parity-conditioned}\\
  \chi^{\mathrm{wit},(3)}_{\mathrm{island}}(\lambda)
  &=h_2\!\left(\frac{1+s_\lambda}{2}\right).
  \label{eq:parity-order-parameter}
\end{align}
The one- and two-time combs are independent of $\lambda$, while the order-three
memory is screened as the early radiation recovers $X$. This realizes
Theorem~\ref{thm:blindness} with $k_\ast=3$. It also identifies the minimal
sufficient memory: one retained bit Markovianizes the late process before the
Page point, and the trivial register suffices after exact recovery. The full
interior implementation may contain further seeds and spectator degrees of
freedom, but the conditional memory cost is governed by this minimal
predictive register $M_{\min}$.

The same construction places the memory at any prescribed order $K_0\geq3$.
For independent fair bits $X_1,\ldots,X_{K_0-1}$, take
\begin{equation}
  b_i=X_i\quad(1\leq i<K_0),
  \qquad
  b_{K_0}=\bigoplus_{i=1}^{K_0-1}X_i.
  \label{eq:higher-parity}
\end{equation}
Every proper marginal is uniform, while the full $K_0$-time parity is fixed.
A running parity bit supplies a minimal causal memory, and the order-resolved
observable first becomes nonzero at $K=K_0$.

To relate these recovery paths to evaporation time, we use the Hayden--Preskill
decoupling picture~\cite{HaydenPreskill}. The parameter $\lambda$ is a
coordinate on the decoder family rather than an independent coupling.
Its retarded-time dependence is determined by
the amount of early radiation available relative to the predictive memory and
the code subspace. Define the overload
\begin{equation}
  \mu(t,C)
  =\log_2 d_{R,\mathrm{eff}}(t)
   -\log_2 d_{M_{\min}}
   -\log_2\dim C,
  \label{eq:overload}
\end{equation}
where $d_{R,\mathrm{eff}}(t)$ is the effective dimension of the collected
radiation participating in the decoder. The finite-dimensional scans use the
calibration
\begin{equation}
  \epsilon_C(t)
  =2^{-\mu_+(t,C)/2},
  \qquad
  \mu_+(t,C)=\max\{\mu(t,C),0\},
  \qquad
  \lambda(t,C)=\ell\!\left(\epsilon_C(t)\right),
  \label{eq:recovery-calibration}
\end{equation}
with $\ell$ monotone decreasing, $\ell(1)=0$, and $\ell(0)=1$. For the erasure
path, $\ell(\epsilon)=1-\epsilon$; for the coherent path, $\ell$ fixes the
overlap $s_\lambda$ and hence the optimal binary decoding error. Equation
\eqref{eq:recovery-calibration} is the model calibration of the decoupling
threshold: increasing $d_{R,\mathrm{eff}}$ improves recovery, while increasing
$d_{M_{\min}}$ or $\dim C$ delays it.

Combining interior sufficiency with the reconstruction error gives the
post-Page estimate
\begin{equation}
  \chi_{\mathrm{island}}(t)\big|_C
  \geq
  \mathcal{N}_\varnothing(t)
  -\delta_{\mathrm{model}}
  -c_{\mathcal{I},n}\sqrt{\epsilon_C(t)},
  \label{eq:postpage}
\end{equation}
which is the specialization of Theorem~\ref{thm:bridge} to
$\mathsf{E}_{\lambda,C}$. A larger code subspace shifts the decoding threshold
to later retarded time and leaves a larger finite-error remainder. Varying
$d_{M_{\min}}$ traces the complementary resource statement: before recovery
the auxiliary cost is $\log_2 d_{M_{\min}}$, while after recovery it falls to
zero within the selected tolerance, as in Proposition~\ref{prop:memcost}.

\begin{figure}[t]
  \centering
  \includegraphics[width=\textwidth]{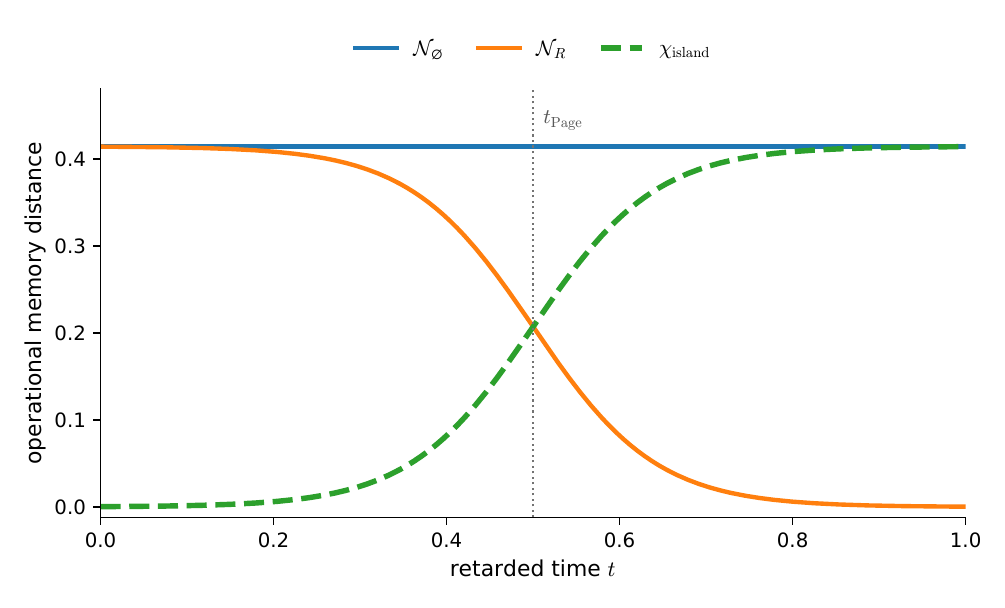}
  \caption{Operational Page-time transition in the minimal evaporation comb.
  Retarded time enters through the recovery coordinate $\lambda(t,C)$, and the
  curves report the operational measure, with endpoints fixed by the
  closed-form witnesses. The unconditioned late-radiation memory
  $\mathcal{N}_\varnothing$ remains finite across the transition, while the
  memory $\mathcal{N}_R$ conditioned on the early radiation decreases as the
  decoder becomes effective. Their difference
  $\chi_{\mathrm{island}}=\mathcal{N}_\varnothing-\mathcal{N}_R$ consequently
  rises from $o(1)$ to $\Theta(1)$. The single-bin entropy remains fixed at its
  thermal value throughout the sweep. For the parity extension the same curves
  are first resolved at order $k_\ast=3$, while every one- and two-time reduced
  process continues to coincide with the Markovian reference.}
  \label{fig:central}
\end{figure}

Figure~\ref{fig:central} shows the operational memory for the family
$\mathsf{E}_{\lambda,C}$. Equations~\eqref{eq:screening-closed}
and~\eqref{eq:parity-conditioned} give computable witnesses and fix the
endpoint values. Before decoding, the early register contains too
little information about $M_{\min}$ to screen the late bins, so
$\mathcal{N}_R\simeq\mathcal{N}_\varnothing$ and
$\chi_{\mathrm{island}}\simeq0$; once the overload becomes positive, the
recovered predictive record screens the common cause and drives
$\mathcal{N}_R$ toward the residual scale
$\delta_{\mathrm{model}}+c_{\mathcal I,n}\sqrt{\epsilon_C}$.
Hayden--Preskill decoupling determines when the decoder becomes effective,
and the code subspace determines its recovery accuracy.

The single-bin entropy remains constant throughout the scan. In the parity
realization the pair processes are also fixed, so detecting the transition
requires probes spanning at least three slots.

\section{Numerical checks and robustness}
\label{sec:numerics}

This section establishes the robustness of the transition on the single family
$\mathsf{E}_{\lambda,C}$ of Section~\ref{sec:model}. Figure~\ref{fig:robustness} collects
four panels: three axes of robustness, the instrument family, the code subspace, and the
diagnostic order, together with the collapse of the memory cost. All plotted quantities are
the trace-norm distance $\mathcal{N}$, which for the passive families, those in which the
observer collects and measures each bin, is the total variation of the outcome statistics,
with success probability $(1+\mathcal{N})/2$. Each curve carries one of three provenances,
stated in the caption: closed form, where the minimization over the Markovian reference
solves analytically; grid-verified, where the numerical minimization is checked under grid
refinement; and bracketed, where the value is reported between a certified lower and upper
bound. The grid-verified values do not carry a rigorous discretization-error enclosure.

Finite computations require stated tolerances, and the convergence protocol fixes them once.
The numerical floor $\epsilon_{\mathrm{num}}$ is the accuracy of the searches and solvers.
The truncation residual
\begin{equation}
  \epsilon_{\mathrm{trunc}}(K)
  =\max_j|\mathcal{N}^{(K+j+1)}-\mathcal{N}^{(K+j)}|
  \label{eq:order-residual}
\end{equation}
measures how much the estimate still moves when the probe order is raised, and the
saturation order $k_\ast$ is the first $K$ past which this residual stays below tolerance.
The memory-cost tolerance is chosen to satisfy
\begin{equation}
  \epsilon_\ast>
  \max\{\epsilon(C),\epsilon_{\mathrm{num}},\epsilon_{\mathrm{trunc}}(k_\ast)\},
  \label{eq:memory-cost-tolerance}
\end{equation}
so that the
reported collapse is resolved above every error source.

\begin{figure}[t]
\centering
\includegraphics[width=\textwidth]{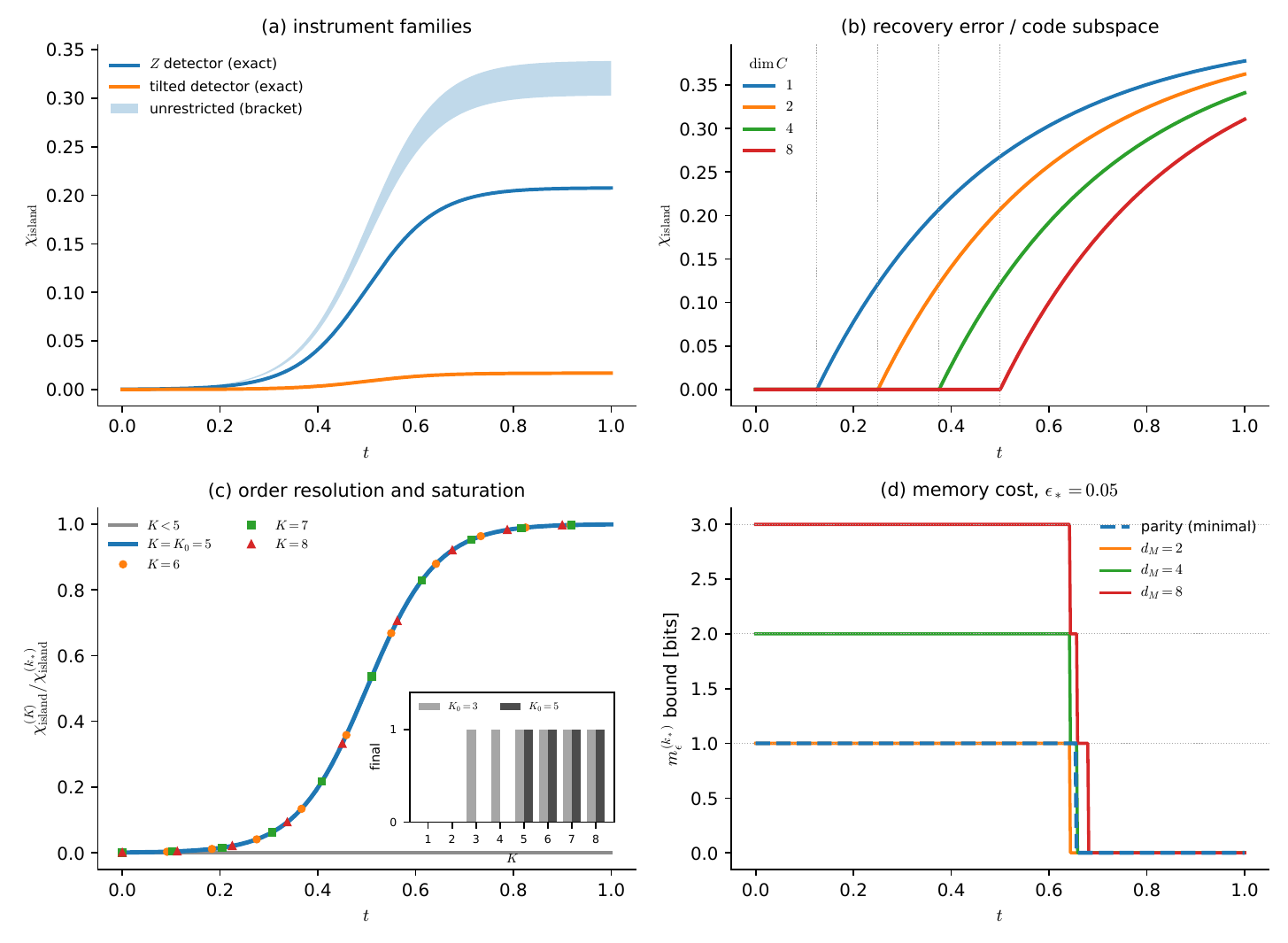}
\caption{Numerical robustness of the Page-time transition in the minimal evaporation comb,
along three axes of robustness and the memory-cost collapse.\ (a) Instrument-family dependence,
on the coherent emission of the production model: the resolved memory $\mathcal{N}_\varnothing$
spans an order of magnitude across a tilted detector, the computational detector, and the
unrestricted family shown as a bracket, while $\chi_{\mathrm{island}}$ rises from $o(1)$ to
$\Theta(1)$ across $t_{\mathrm{Page}}$ for every family.\ (b) Recovery error and code subspace:
increasing $\dim C$ shifts the Page point later through the overload~\eqref{eq:overload} and
shallows the post-Page plateau, bounded below by~\eqref{eq:postpage}.\ (c) Order resolution and
saturation: for the order-$K_0$ parity family the normalized estimator
$\chi_{\mathrm{island}}^{(K)}/\chi_{\mathrm{island}}^{(k_\ast)}$ vanishes for $K<K_0$ and equals
one for $K\ge K_0$, saturating at $k_\ast=K_0$, here $K_0=5$; the inset shows the final-time
value against $K$ for the minimal instance $K_0=3$ and for $K_0=5$.\ (d) Conditional
memory-cost bound $m_\epsilon^{(k_\ast)}$: for the parity family the minimal predictive cost
collapses from one bit to zero, while the common-cause family collapses from $\log_2 d_M$ to
zero for $d_M\in\{2,4,8\}$. All curves are closed form, grid-verified, or bracketed, and the
common-cause memory-cost curves are upper bounds.}
\label{fig:robustness}
\end{figure}

Panel (a) answers the first question: does the transition depend on the probe? Memory is
defined relative to the instrument family, the operational stance of Section~\ref{sec:framework},
so the physical content is the change of this relative quantity across $t_{\mathrm{Page}}$,
and the transition must survive every reasonable family. The panel runs on the coherent
emission of the production model, the setting in which the choice of instrument matters: the
bins carry coherences, and a detector resolves more or less memory according to its alignment
with the emission basis. Three families are compared. The computational detector reads the
emission basis directly. The tilted detector is rotated away from it. The unrestricted value
is the memory available to the best measurement quantum mechanics allows, reported as a
bracket whose lower edge is the best product-basis measurement, a lower bound by data
processing, and whose upper edge is a multistart product-state search, a rigorous upper
bound. The resolved scale of $\mathcal{N}_\varnothing$ spans an order of magnitude, from
$0.017$ for the tilted detector through $0.21$ for the computational detector to the bracket
$[0.30,0.34]$, while every family shows the same rise of $\chi_{\mathrm{island}}$ from $o(1)$
to $\Theta(1)$ across $t_{\mathrm{Page}}$. The scale of the memory is a property of the probe;
its Page-time transition is a property of the process. A control run separates decoding from
mere conditioning: on a scrambling emission with a decoder misaligned from the memory basis,
the conditioned memory stays near the unconditioned value and $\chi_{\mathrm{island}}$ remains
$o(1)$. The rise therefore registers genuine decoding of the interior, the dynamical reading
of Section~\ref{sec:conditioned}.

Panel (b) answers the second question: how does the transition depend on the code subspace on
which reconstruction holds? The recovery quality tracks the growth of the early register
through the Hayden-Preskill overload~\eqref{eq:overload}, the number of early-register qubits
in excess of what the interior and the microstate ensemble require, and this overload sets the
reconstruction error $\epsilon_C(t)$. The panel sweeps $\dim C\in\{1,2,4,8\}$ at fixed $d_M$,
and two effects follow. The Page point shifts later as the code subspace grows, at
$t_{\mathrm{Page}}(C)=(\log_2 d_M+\log_2\dim C)/\rho$ for register growth rate $\rho$: a larger
microstate ensemble requires more early radiation to decode. The post-Page plateau of
$\chi_{\mathrm{island}}$ shallows with $\dim C$, from $0.38$ at $\dim C=1$ to $0.31$ at
$\dim C=8$, and stays above the bound~\eqref{eq:postpage}. The transition survives across the
ensemble, with a magnitude the reconstruction error controls.

Panel (c) answers the third question: is the effect an artifact of probing at a fixed number
of time steps? A transition confined to one diagnostic order would be a property of the probe
rather than the process, so the panel tracks the order-resolved estimators across $K$. These
estimators are nested, since a probe spanning $K$ steps can implement any probe spanning
fewer, so $\chi_{\mathrm{island}}^{(K)}$ increases with $K$ to the full-comb value and settles
at the saturation order $k_\ast$. The test family is the order-$K_0$ parity comb, built so
that every marginal below order $K_0$ matches the Markovian reference, a structure certified
numerically, with exactly one order-$K_0$ marginal carrying the memory. The plotted estimator
is normalized by its saturated value, which removes the overall scale and leaves the panel
testing structure alone: the sequence vanishes for $K<K_0$, equals one for $K\ge K_0$, and the
plateau criterion returns $k_\ast=K_0$. The panel takes $K_0=5$ to display that the saturation
order is read from the computation; the minimal instance $k_\ast=3$ realizes
Theorem~\ref{thm:blindness} and appears in the inset of Figure~\ref{fig:robustness}(c).
The resolved order is a property of the process, and $K$ is the resolution of the experiment.

Panel (d) states the transition as a memory budget. The conditional memory cost
$m_\epsilon(L|R)$ is the smallest side register that, supplied to the tester together with
$R$, renders the late statistics Markovian to tolerance $\epsilon_\ast$, and the panel reports
its saturated bound $m_\epsilon^{(k_\ast)}$ across the sweep. For the parity family the
minimal forward predictive memory is the retained bit $X$: conditioned on $X$, the third bin
is determined by the recent bin $b_2=Y$, so one bit screens the distant past exactly. The
parity cost therefore collapses from one bit to zero once the decoded early radiation brings
the zero-auxiliary residual within tolerance. For the common-cause family the cost collapses
from $\log_2 d_M$ to zero at the decoding threshold, traced for $d_M\in\{2,4,8\}$. These
common-cause values are upper bounds, since a register that achieves the tolerance certifies
the cost from above; their pre-Page level is $\log_2\dim M$, the full interior. The collapse
of this cost is the transition read as the memory an asymptotic observer supplies, the
secondary order parameter of Section~\ref{sec:conditioned}.

The single-bin entropy and the adjacent-pair process tensor hold fixed across
$t_{\mathrm{Page}}$ while $\chi_{\mathrm{island}}$ rises, the comparison set against the order
parameter in the blindness panel, Figure~\ref{fig:robustness}(c). Across the four panels the
transition is robust to the instrument, the code subspace, and the diagnostic order, carried
by the multi-time observable that the one- and two-time diagnostics track only where they
agree with a Markovian reference.

\section{From the QES transition to process Markovianization}
\label{sec:semiclassical}

Semiclassical gravity supplies the physical inputs of the conditioning
results.  The interior memory becomes reconstructible from the collected
radiation when the dominant quantum-extremal-surface saddle includes the
island, and the same reconstruction turns the early radiation into a
predictive register for the future Hawking process.  The resulting process
transition has a characteristic temporal structure.  The four radiation
regions entering the order-three witness exchange saddles at different
retarded times, so the entropy Page time lies inside a finite interval over
which the conditioned multi-time observable reorganizes.  The unconditioned
witness remains fixed in the regime considered here, while the conditioned
witness approaches the island-screened value once the relevant saddles become
compatible and the diary is recoverable from the early radiation.

Consider an evaporating JT black hole coupled to a nongravitating bath,
following Refs.~\cite{AEMM,Hollowood:2020cou,Hollowood:2020couPage}, with an
infalling diary $M$ entangled with an external reference as in
Ref.~\cite{Penington}.  The diary carries the predictive record that correlates
the late bins.  Its ensemble defines the code subspace $C$, while the bath
radiation collected before the late experiment defines the side register $R$.
The correspondence with the finite-dimensional family
$\mathsf E_{\lambda,C}$ is direct: the diary supplies the minimal interior
memory, the geometry determines when a decoder from $R$ becomes available,
and the generalized-entropy gaps determine the recovery coordinate.

For the explicit bookkeeping model we use the Page-regime entropy profile
\begin{equation}
  S_{\mathrm{BH}}(u)=S_0 e^{-u/\tau},
  \qquad
  S_{\mathrm{rad}}(u)=S_0-S_{\mathrm{BH}}(u),
  \qquad
  \tau=\frac{2}{k}.
  \label{eq:aemmexp}
\end{equation}
The second relation fixes an entropy-conserving calibration for the bath and
keeps the parameters directly comparable with the finite-dimensional scans.
The exact Bessel-function boundary reparametrization of the AEMM solution gives
the same Page-regime trajectory and the expected scrambling-scale QES lag to
the accuracy relevant for the figure.  The arbitrary-subset entropies are then
evaluated with the adiabatic interval bookkeeping of
Refs.~\cite{Hollowood:2020cou,Hollowood:2020couPage}.

The interval entropies enter the passive order-three observable,
\begin{equation}
  W_R^{(3)}
  =S(Rb_1b_2)+S(Rb_2b_3)-S(Rb_2)-S(Rb_1b_2b_3),
  \label{eq:fourentropy}
\end{equation}
with the four-region family
\begin{equation}
  \mathcal X_3
  =\big\{Rb_2,\,Rb_1b_2,\,Rb_2b_3,\,Rb_1b_2b_3\big\}.
  \label{eq:region-family}
\end{equation}
For each $X\in\mathcal X_3$, define the generalized-entropy gap
\begin{equation}
  \Delta_X(u)
  =S_{\mathrm{gen}}^{\mathrm{ext}}(X;u)
   -S_{\mathrm{gen}}^{\mathrm{isl}}(X;u).
  \label{eq:gapdef}
\end{equation}
The island saddle dominates for $\Delta_X>0$, and its exchange time
$u_X^\ast$ satisfies $\Delta_X(u_X^\ast)=0$.  Because the four regions have
different late-bin endpoints, their gaps need not vanish simultaneously.

Away from the exchange points, a common island makes the relation to process
Markovianization precise.

\begin{proposition}[Semiclassical reduction of the Markov witness]
\label{prop:semiclassical}
Evaluate the four entropies in~\eqref{eq:fourentropy} with the island
prescription on an adiabatic evaporating background carrying a diary $M$.
Suppose the selected QES saddles form a compatible family: every island
contains the diary, the associated bulk domains admit a common predictive
island $I_\star$, and the geometric terms cancel in the four-entropy
combination up to a defect $\eta_{\mathrm{QES}}^{(3)}$.  Then
\begin{equation}
  W_R^{(3)}
  =I_{\mathrm{bulk}}
    \bigl(b_1:b_3\,\big|\,b_2 R I_\star\bigr)
   +O\!\left(\eta_{\mathrm{QES}}^{(3)}\right).
  \label{eq:semired}
\end{equation}
If the diary is reconstructible from $R$ on the code subspace $C$ with error
$\epsilon(C)$, then
\begin{equation}
  W_R^{(3)}
  \leq
  \delta_{\mathrm{bulk}}^{(3)}
  +c_3\sqrt{\epsilon(C)}
  +\eta_{\mathrm{QES}}^{(3)},
  \label{eq:semibridge}
\end{equation}
where $\delta_{\mathrm{bulk}}^{(3)}$ is the residual order-three bulk memory
after conditioning on the predictive island.
\end{proposition}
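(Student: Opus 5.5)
I would prove the reduction \eqref{eq:semired} term by term, using the island formula on each region in $\mathcal X_3$. For each $X\in\mathcal X_3$, the island prescription gives $S(X)=\min\{S^{\mathrm{ext}}_{\mathrm{gen}}(X),S^{\mathrm{isl}}_{\mathrm{gen}}(X)\}$. Under the compatibility hypothesis every selected saddle is the island one, so
\[
S(X)=\frac{\mathrm{Area}(\partial I_X)}{4G_N}+S_{\mathrm{bulk}}(X\cup I_X).
\]
The main step is to replace each island $I_X$ by the common predictive island $I_\star$. The replacement has two parts. The geometric terms are combined with the same signs as in \eqref{eq:fourentropy}, namely $+,+,-,-$. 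Because the four regions have a common QES up to the stated mismatch, the area terms cancel, and by hypothesis the residual is $\eta^{(3)}_{\mathrm{QES}}$. The bulk terms change from $S_{\mathrm{bulk}}(X\cup I_X)$ to $S_{\mathrm{bulk}}(X\cup I_\star)$. This changes each entropy by an amount controlled by the same QES mismatch, through continuity of bulk entropy under small deformations of the island domain on the adiabatic background. I would absorb this change into $O(\eta^{(3)}_{\mathrm{QES}})$.

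With all four bulk entropies evaluated on the common domain $I_\star$, the signed combination is exactly the conditional mutual information $I_{\mathrm{bulk}}(b_1:b_3\,|\,b_2RI_\star)$. This uses the standard identity
\[
I(A:C|B)=S(AB)+S(BC)-S(B)-S(ABC)
\]
with $B=b_2RI_\star$. That gives \eqref{eq:semired}.

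For \eqref{eq:semibridge}, I would bound the bulk conditional mutual information by passing from $RI_\star$ to the diary, following the route of Theorem~\ref{thm:bridge} at the level of entropies.
\begin{itemize}
\item Since every island contains $M$, $I_\star\supseteq M$.
\item The residual $\delta^{(3)}_{\mathrm{bulk}}$ is defined as the order-three memory after conditioning on the predictive island, so $I_{\mathrm{bulk}}(b_1:b_3|b_2RI_\star)$ is compared with the quantity in which $I_\star$'s predictive content is supplied.
\item Reconstructibility of the diary from $R$ gives, via Fuchs--van de Graaf, a trace-distance bound $\sqrt{2\epsilon(C)}$ between the true state on $b_1b_2b_3RM$ and its recovered version.
\item Continuity of conditional mutual information (Alicki--Fannes--Winter) converts this bound into an additive error. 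Because the late bins are finite-dimensional coarse-grained registers, the error is of the form $c_3\sqrt{\epsilon(C)}$ up to logarithmic dimension factors.
\end{itemize}
Combining the three pieces with \eqref{eq:semired} gives \eqref{eq:semibridge}.

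The hard step is controlling the bulk-entropy change when the distinct islands $I_X$ are replaced by $I_\star$. The hypothesis states only that the geometric terms cancel up to $\eta^{(3)}_{\mathrm{QES}}$. I would therefore interpret the compatibility assumption as covering both the area mismatch and the bulk domain mismatch, with the latter measured by the generalized-entropy gaps $\Delta_X$ away from the exchange points $u_X^\ast$. If that interpretation is not available, I would instead show that $u\mapsto S_{\mathrm{gen}}$ is stationary at the QES, so that a displacement of the QES changes $S_{\mathrm{gen}}$ only at second order. This would keep the mismatch subleading and within $O(\eta^{(3)}_{\mathrm{QES}})$.
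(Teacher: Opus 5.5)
Your derivation of \eqref{eq:semired} follows the paper. You insert the island rule into each entropy of \eqref{eq:fourentropy}, let the signs $(+,+,-,-)$ cancel the geometric terms up to $\eta^{(3)}_{\mathrm{QES}}$, and read the bulk remainder as a conditional mutual information conditioned on $b_2RI_\star$.

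The step you call hard is not proved in the paper. It is the hypothesis that the bulk domains admit a common $I_\star$, and Appendix~\ref{app:compatibleislands} assigns any mismatch of island endpoints or of the selected replica state to $\eta^{(3)}_{\mathrm{QES}}$. Your first interpretation is therefore exactly the intended one, and no continuity argument is needed. The stationarity fallback is a legitimate alternative: replacing each $I_X$ by $I_\star$ inside the full $S_{\mathrm{gen}}$ makes the area terms cancel exactly. But extremality is in the QES position rather than in $u$, and the resulting error is quadratic in the QES displacements, which you would still have to relate to $\eta^{(3)}_{\mathrm{QES}}$.

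The argument for \eqref{eq:semibridge} has a genuine gap. The paper appeals to the computable form of Theorem~\ref{thm:bridge}, whose proof in Appendix~\ref{app:constant} rests on the mediation assumption A0, with residual $\delta=\delta_{\mathrm{med}}+\delta_M$. You omit mediation and apply Alicki--Fannes--Winter to an unmatched pair of states. (A2) compares states on $ML$ with $R$ consumed by the recovery, whereas your target conditions on $R$ and $I_\star$ jointly. Conditional mutual information is not monotone under channels acting on the conditioning register, so conditioning on $\widehat M=\mathcal R(R)$ cannot be traded for conditioning on $R$. This is why Theorem~\ref{thm:bridge} is proved for $d_{\mathrm{op}}$ against $\mathrm{CPD}_R$, where the reference may pre-process $R$. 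Moreover, the binary-entropy term in AFW makes the error of order $\sqrt{\epsilon}\log(1/\epsilon)$, which is not of the form $c_3\sqrt{\epsilon(C)}$; that logarithm is not a dimension factor.

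The repair is to note that $M\subseteq I_\star$ already sits in the conditioning set. The only thing to control is then the extra conditioning on $R$, which without mediation can create order-three correlation by itself. The chain rule and strong subadditivity give
\[
I_{\mathrm{bulk}}(b_1:b_3\,|\,b_2RI_\star)\le I_{\mathrm{bulk}}(b_1:b_3\,|\,b_2I_\star)+I_{\mathrm{bulk}}(b_1b_3:R\,|\,b_2I_\star).
\]
Read $\delta^{(3)}_{\mathrm{bulk}}$, like $\delta$ in Theorem~\ref{thm:bridge}, as containing both terms on the right. Then \eqref{eq:semibridge} follows from \eqref{eq:semired} without any continuity estimate.
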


\emph{Proof sketch.}
Substituting the island rule into~\eqref{eq:fourentropy} separates each entropy
into a geometric term and a bulk entropy.  Compatibility identifies the four
bulk domains with one common island $I_\star$, while inclusion-exclusion
cancels the geometric contribution up to
$\eta_{\mathrm{QES}}^{(3)}$, which gives~\eqref{eq:semired}.  The diary belongs
to $I_\star$, so the computable form of Theorem~\ref{thm:bridge} and the
recovery estimate of Ref.~\cite{FawziRenner} bound the remaining bulk
conditional mutual information by
$\delta_{\mathrm{bulk}}^{(3)}+c_3\sqrt{\epsilon(C)}$.  This yields
\eqref{eq:semibridge}. \hfill$\square$

A two-saddle interpolation provides a useful estimate of the compatibility
defect near an exchange.  Assigning to each entropy the soft-min correction
$\log_2(1+2^{-|\Delta_X|})$ gives
\begin{equation}
\begin{aligned}
  \eta_{\mathrm{QES}}^{(3)}(u)
  &\leq
  \sum_{X\in\mathcal X_3}
  \log_2\!\left(1+2^{-|\Delta_X(u)|}\right)\\
  &\leq
  4\log_2\!\left(1+2^{-\Delta_{\min}(u)}\right),
  \qquad
  \Delta_{\min}(u)=\min_{X\in\mathcal X_3}|\Delta_X(u)|.
\end{aligned}
  \label{eq:etaqes}
\end{equation}
Thus the defect is exponentially small in the common large-gap regime and can
be order one during the mixed-saddle crossover.  The interpolation is a
controlled diagnostic of finite-gap sensitivity rather than a substitute for
the joint replica state.

This saddle competition affects the assisted and late-only witnesses differently.
The regions entering $W_R^{(3)}$ all contain the early radiation and therefore
compete between exterior and diary-containing island saddles.  In the
bookkeeping regime of Figure~\ref{fig:jtwindow}, the bins-only regions entering
\begin{equation}
  W_\varnothing^{(3)}
  =I(b_1:b_3\mid b_2)
  \label{eq:unconditioned-witness-semiclassical}
\end{equation}
remain on the exterior branch.  The unconditioned witness is consequently
constant while the conditioned witness changes as the four $R$-assisted
regions exchange saddles.  This is the gravitational origin of the
flat-$\mathcal N_\varnothing$, decreasing-$\mathcal N_R$ structure of the
finite-dimensional model.

For the parity diary,
\begin{equation}
  b_1=X,
  \qquad
  b_2=Y,
  \qquad
  b_3=X\oplus Y,
  \qquad
  X,Y\ \text{independent and uniform},
  \label{eq:semiclassical-parity}
\end{equation}
and therefore
\begin{equation}
  W_\varnothing^{(3)}
  =I(X:X\oplus Y\mid Y)
  =1\ \mathrm{bit}.
  \label{eq:parityplateau}
\end{equation}
Independent thermal factors contribute additively and carry no order-three
conditional mutual information, so the diary record fixes the one-bit
plateau.

For the assisted witness, the staggered saddle exchanges delimit a finite
transition interval. Define
\begin{equation}
  u_{\mathrm{open}}
  =\min_{X\in\mathcal X_3}u_X^\ast,
  \qquad
  u_{\mathrm{close}}
  =\max_{X\in\mathcal X_3}u_X^\ast,
  \label{eq:mixed-saddle-times}
\end{equation}
and, for a selected tolerance $\varepsilon$,
\begin{equation}
  u_{\mathrm{proc}}^{(3)}(\varepsilon)
  =\inf\Bigl\{
      u>u_{\mathrm{close}}:
      W_{R,\mathrm{soft}}^{(3)}(u)\leq\varepsilon
    \Bigr\}.
  \label{eq:threetimes}
\end{equation}
The interval $[u_{\mathrm{open}},u_{\mathrm{close}}]$ is the mixed-saddle
interval, while
$[u_{\mathrm{open}},u_{\mathrm{proc}}^{(3)}(\varepsilon)]$ is the
process-transition interval resolved at tolerance $\varepsilon$.

Two reference times distinguish the entropy and recovery criteria:
\begin{equation}
  S_{\mathrm{rad}}(u_{\mathrm{Page}})
  =S_{\mathrm{BH}}(u_{\mathrm{Page}}),
  \qquad
  S_{\mathrm{rad}}(u_{\mathrm{diary}})
  =S_{\mathrm{BH}}(u_{\mathrm{diary}})+H_M.
  \label{eq:pagerecoverytimes}
\end{equation}
The first is the background entropy Page time.  The second is the
 diary-shifted threshold at which the early radiation carries the
 generalized-entropy budget required to reconstruct the diary.  For the
parameters of Figure~\ref{fig:jtwindow},
\begin{equation}
  u_{\mathrm{open}}
  <u_{\mathrm{Page}}
  <u_{\mathrm{close}}
  <u_{\mathrm{diary}}
  <u_{\mathrm{proc}}^{(3)}(10^{-2}).
  \label{eq:timeordering}
\end{equation}
The entropy Page time therefore marks an interior point of the multi-time
reorganization rather than its unique completion time.  Enlarging the code
subspace shifts the individual exchanges through the same additive entropy
budget that controls the finite-dimensional decoder.  Locally,
\begin{equation}
  \frac{\partial u_X^\ast}
       {\partial\log_2\dim C}
  =\frac{1}{\partial_u\Delta_X(u_X^\ast)},
  \label{eq:codeshift}
\end{equation}
whenever the code-subspace term enters $\Delta_X$ additively and the crossing
is transverse.

\begin{figure}[t]
  \centering
  \includegraphics[width=\textwidth]{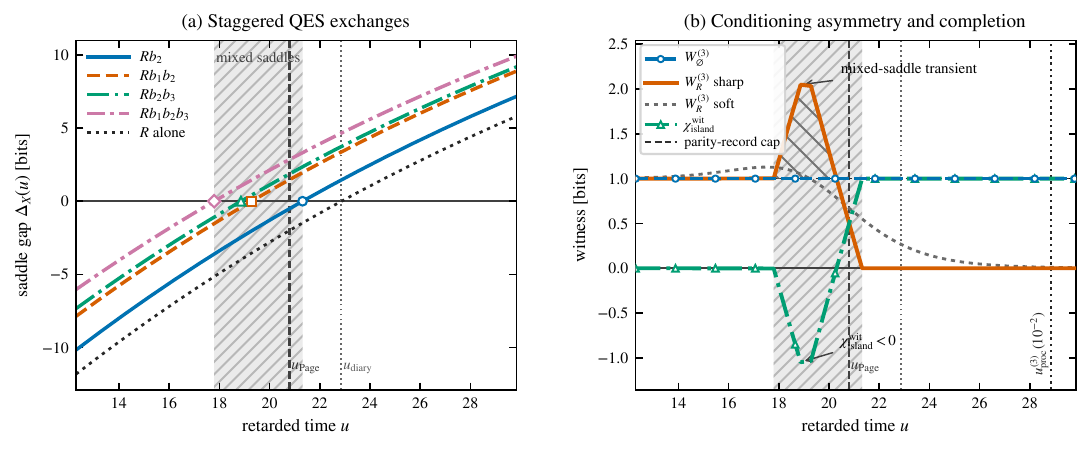}
  \caption{Semiclassical conditioning asymmetry in the entropy-conserving
  exponential bookkeeping model, with
  $S_{\mathrm{BH}}(u)=S_0e^{-u/\tau}$, $S_0=30$ bits, $\tau=30$,
  $\Delta u=1$, and a two-bit parity diary.\ (a) Generalized-entropy gaps of
  the four $R$-assisted regions entering $W_R^{(3)}$.  Their exchange times
  are $17.79$, $18.86$, $19.28$, and $21.32$; the shaded band is the
  mixed-saddle interval.  The background Page time is
  $u_{\mathrm{Page}}=20.79$, while the diary threshold is
  $u_{\mathrm{diary}}=22.86$.  The bins-only regions retain the exterior
  saddle and give the one-bit unconditioned plateau.\ (b) The separately
  minimized sharp-saddle expression reaches $2.05$ bits during the
  mixed-saddle interval.  The horizontal line is the one-bit record cap in
  \eqref{eq:recordcap}.  The two-saddle soft-min surrogate reduces the peak
  to $1.13$ bits and reaches $10^{-2}$ at
  $u_{\mathrm{proc}}^{(3)}=28.85$.  The sharp and soft curves diagnose the
  sensitivity of the crossover to the joint replica treatment; the
  compatible-island regime is governed by Proposition~\ref{prop:semiclassical}.}
  \label{fig:jtwindow}
\end{figure}

\paragraph{The record cap and the mixed-saddle crossover.}
For an independent-emission parity record, write
\begin{equation}
  b_i=C_i\otimes T_i,
  \qquad
  \rho_{Rb_1b_2b_3}
  =\rho_{RC_1C_2C_3}
   \otimes\tau_{T_1}\otimes\tau_{T_2}\otimes\tau_{T_3},
  \label{eq:independentemission}
\end{equation}
where $C_1=X$, $C_2=Y$, $C_3=X\oplus Y$, and the thermal factors are
mutually independent.  Every compatible side register satisfies
\begin{equation}
  W_R^{(3)}
  =I(C_1:C_3\mid C_2R)
  \leq H(C_1\mid C_2R)
  \leq1
  =W_\varnothing^{(3)}.
  \label{eq:recordcap}
\end{equation}
Hence $\chi_{\mathrm{island}}^{\mathrm{wit}}\geq0$ throughout this
factorized record class.

The sharp-saddle overshoot in Figure~\ref{fig:jtwindow} arises because separate
minimization of the four entropies combines dominant saddles
that need not arise from one common replica state during the mixed-saddle
interval.  A persistent excursion above the cap in the joint calculation
would establish fine-grained gravitational correlations across the emitted
bins.  Its suppression would instead assign the excursion to the
compatibility defect $\eta_{\mathrm{QES}}^{(3)}$.  The soft-min curve provides
an illustrative finite-gap bracket and reduces the cap excess from about
$1.05$ bits to about $0.13$ bits. The compatible-island result does not
determine the detailed crossover, which requires the joint replica
calculation.

The saddle and recovery criteria thus distinguish four characteristic times
in the Page transition. At $u_{\mathrm{open}}$ the first region entering the witness
selects a diary-containing island.  At $u_{\mathrm{close}}$ all four selected
saddles are compatible and the witness reduces to the bulk conditional mutual
information in~\eqref{eq:semired}.  At $u_{\mathrm{diary}}$ the early
radiation has the generalized-entropy budget required to reconstruct the
diary on $C$, producing a channel $\mathcal R_{R\to M}$ with error
$\epsilon(C)$.  At $u_{\mathrm{proc}}^{(3)}(\varepsilon)$ the residual
conditioned witness lies below the chosen operational tolerance.

The geometry consequently realizes the screening implication in its physical
form,
\begin{equation}
  \begin{aligned}
    &\bigl(M\ \text{screens the late process}\bigr)
      \quad\text{and}\quad
      \bigl(R\ \text{reconstructs the predictive memory}\bigr)
      \\
    &\hspace{4.5cm}\Longrightarrow
      \bigl(R\ \text{screens the late process}\bigr).
  \end{aligned}
  \label{eq:semiclassical-screening}
\end{equation}
The entropy transition determines when the relevant island saddles become
available, while the multi-time observable determines when their predictive
content is sufficient at the selected temporal order and tolerance.  In this
sense, the semiclassical Page transition is resolved as an interval of
process Markovianization whose completion is controlled by reconstruction
error, saddle compatibility, and the order of the radiation experiment.

\section{Discussion and outlook}
\label{sec:discussion}

The Page transition gives the early radiation a new operational role: it becomes
predictively sufficient for the future multi-time statistics of the Hawking
process. The preceding sections established this statement through the rise of
$\chi_{\mathrm{island}}$, the collapse of the conditional memory cost, and the
forward and converse relations between reconstruction and Markovianization. The
same predictive sufficiency admits three complementary interpretations. The
process wedge organizes which bulk influences the early radiation can simulate
at a chosen temporal resolution, a semiclassical memory budget relates the
required auxiliary memory to QES data, and an algebraic formulation identifies
the form appropriate to gravitational subregion algebras. These structures also
isolate a set of concrete calculations that extend the present results.

\subsection{Predictive reconstruction and the process wedge}
\label{sec:processwedge}

Entanglement-wedge reconstruction concerns the recovery of bulk states or
operators. The process problem asks for the predictive content required to
reproduce the response of the future radiation to interventions. The two tasks
coincide when the microscopic state is fixed by its future influence, while
finite temporal resolution naturally organizes predictive reconstruction into
a filtration.

Let $X$ be a bulk memory register in a finite-dimensional truncation, let
$\mathcal J$ range over the admissible interventions on $X$, and write
$\Upsilon_L[\mathcal J_X;\rho]$ for the induced late comb when the code state is
$\rho\in C$. Two registers are predictively equivalent at resolution $K$ when
no tester spanning at most $K$ late steps distinguishes their influence,
\begin{equation}
  X\sim_K X'
  \quad\Longleftrightarrow\quad
  \sup_{\mathcal J}\sup_{\rho\in C}
  d_{\mathrm{op}}^{(K)}\!\left(
    \Upsilon_L[\mathcal J_X;\rho],
    \Upsilon_L[\mathcal J_{X'};\rho]
  \right)=0.
  \label{eq:pw-equivalence}
\end{equation}
The class $[X]_K$ is the predictive content of $X$ visible at that resolution.

\begin{definition}[Process wedge at resolution $K$]
\label{def:pw}
Fix an instrument family $\mathcal I$, a code subspace $C$, a tolerance
$\varepsilon$, and a temporal resolution $K$. The class $[X]_K$ belongs to
$\mathrm{PW}^{(K),\varepsilon}(R;C,\mathcal I)$ when there exists a single
channel $\mathcal S_{R\to\widehat X}$ such that
\begin{equation}
  \sup_{\mathcal J}\sup_{\rho\in C}
  d_{\mathrm{op}}^{(K)}\!\left(
    \Upsilon_L[\mathcal J_X;\rho],
    \Upsilon_L[\mathcal J_{\widehat X};\rho]
  \right)
  \leq \varepsilon,
  \qquad
  \widehat X=\mathcal S_{R\to\widehat X}(R).
  \label{eq:pwdef}
\end{equation}
Thus the process wedge contains the bulk influence classes whose effects on the
late multi-time experiment are simulable from the early radiation.
\end{definition}

The dependence on $K$ is part of the construction. Agreement at resolution
$K'$ implies agreement for every shorter tester, so the natural projection from
$K'$-resolved classes to $K$-resolved classes maps
\begin{equation}
  \mathrm{PW}^{(K'),\varepsilon}(R;C,\mathcal I)
  \longrightarrow
  \mathrm{PW}^{(K),\varepsilon}(R;C,\mathcal I),
  \qquad K\leq K'.
  \label{eq:pw-filtration}
\end{equation}
The full-resolution wedge $\mathrm{PW}^{\varepsilon}(R;C,\mathcal I)$ consists
of the influence simulable by one channel at every temporal order. In the
single-order parity family, the memory class agrees with the trivial class for
$K<k_\ast$ and becomes distinguishable at $K=k_\ast$; after recovery from $R$,
the same class is simulable at every $K$. The saturation order therefore has a
geometric reading as the first temporal resolution at which the unsupplied
predictive memory becomes operationally visible.

Reconstruction supplies a robust relation to the entanglement wedge. Let $q_K$
map a reconstructible bulk register to its $K$-resolved influence class. If
$X$ is recoverable from $R$ on $C$ with entanglement infidelity
$\epsilon_X(C)$, and this recovery controls the induced process uniformly over
the admissible interventions, the recovery channel followed by $\mathcal J$
simulates the influence of $X$. Data processing and the Fuchs--van de Graaf
bound then give
\begin{equation}
  q_K\!\left(EW^{\epsilon_X}(R;C)\right)
  \subseteq
  \mathrm{PW}^{(K),\varepsilon_X}(R;C,\mathcal I),
  \qquad
  \varepsilon_X
  =c_{\mathcal I,K}\sqrt{\epsilon_X(C)},
  \label{eq:ew-pw-inclusion}
\end{equation}
with $c_{\mathcal I,K}$ fixed by the tester normalization and the number of
late slots. This inclusion is the influence-level form of the forward bridge.
After the Page transition, entanglement-wedge reconstruction places the
predictive island memory in every finite-resolution process wedge.

The parity construction also clarifies the object reconstructed by the
converse theorem. Its static common-cause description may contain a larger
record than the minimal memory carried by the causal dilation, while the future
late statistics depend only on $M_{\min}$. The process wedge therefore quotients
microscopic registers by predictive equivalence and asks for a simulator of the
resulting class. This is the operational content of the hierarchy
$M_{\min}\subseteq M$: simulation of the future influence requires the minimal
predictive memory rather than recovery of every microscopic degree of freedom.
The term ``wedge'' is used by analogy, since the object is a quotient of bulk
degrees of freedom by experimentally accessible influence rather than a
geometric subregion. Its relation to one-shot min and max entanglement wedges
at matched error and instrument constraints provides a natural comparison with
one-shot holography~\cite{AkersPenington,OneShotHolography}.

\subsection{A semiclassical memory budget}
\label{sec:memorybudget}

The conditional memory cost gives a resource statement of the transition. It
counts the smallest predictive register that must accompany $R$ to reduce the
late memory below the tolerance $\epsilon$. A one-shot description of a
minimal dilation suggests
\begin{equation}
  m_\epsilon(L\,|\,R)
  \lesssim
  \max\!\left\{
    H_{\max}^{\epsilon'}(M_{\min}|R),0
  \right\}
  +O\!\left(\log\frac{1}{\epsilon-\epsilon'}\right),
  \label{eq:memory-one-shot}
\end{equation}
so the gravitational question is how the QES prescription estimates the
conditional max entropy of the predictive memory~\cite{BisioMemory,Berta}.

For semiclassical states admitting a compressible finite-dimensional
predictive sector, define the positive QES memory budget
\begin{equation}
  \mathcal B_{\mathrm{QES}}(t)
  =\left[
    \frac{\Delta A_{\mathrm{QES}}(t)}{4G_N}
    +S_{\mathrm{bulk}}^{\mathrm{mem}}(t)
  \right]_+.
  \label{eq:qes-budget}
\end{equation}
Here $\Delta A_{\mathrm{QES}}(t)$ is the area contribution separating the
radiation-accessible saddle from the predictive-island saddle relevant to the
memory, and $S_{\mathrm{bulk}}^{\mathrm{mem}}(t)$ is the entropy of the bulk
degrees of freedom carrying that memory across the cut. Gravitational
entropies in~\eqref{eq:qes-budget} are measured in natural logarithms, while
$m_\epsilon$ counts qubits.

\emph{Conjecture (heuristic).} In this regime,
\begin{equation}
  m_\epsilon(L\,|\,R_t)
  \ \lesssim\ 
  \frac{\mathcal B_{\mathrm{QES}}(t)}{\ln 2}
  +\Delta_{\mathrm{1shot}}(\epsilon,\psi),
  \label{eq:memory-budget-conjecture}
\end{equation}
where $\Delta_{\mathrm{1shot}}$ is a smoothing and finite-block correction
that vanishes in the asymptotic limit. The finite-dimensional evaporation comb
fixes the bulk normalization: before recovery its cost is
$\log_2\dim M_{\min}$, and after recovery it is zero at the selected tolerance.
The JT calculation supplies the QES exchange and the
$\sqrt{\epsilon(C)}$ reconstruction correction. The additional step in
\eqref{eq:memory-budget-conjecture}, identifying the one-shot predictive
entropy with the generalized-entropy budget, is the holographic input. Its
proof would turn the collapse of $m_\epsilon$ into a resource-theoretic
refinement of the Page curve, with the dilaton and bulk-memory terms providing
the expected semiclassical scaling~\cite{JafferisKolchmeyer}.

\subsection{Algebraic formulation}
\label{sec:algebraic}

The island and radiation are naturally described by operator algebras. Let
$\mathcal A_I$ denote the predictive island algebra and $\mathcal A_R$ the
algebra available in the early radiation. A conditional expectation
$E_R:\mathcal A\to\mathcal A_R$ implements restriction and coarse-graining,
while a recovery morphism
$\beta^\dagger:\mathcal A_I\to\mathcal A_R$ represents island observables in
the radiation algebra. These maps play distinct roles: $E_R$ specifies the
accessible data, and the Petz-type recovery associated with $\beta^\dagger$
reconstructs the predictive algebra from those data.

The algebraic forward statement is approximate sufficiency. If
$\mathcal A_I$ screens the future late algebra from the earlier history and
$\mathcal A_R$ reconstructs $\mathcal A_I$ uniformly on the code subspace,
then the predictive functional of the late process factors through
$\mathcal A_R$ up to the reconstruction and residual-memory errors. In a
finite tensor product, a state-preserving conditional expectation acts as the
identity on $\mathcal A_R$ and averages the complementary algebra against a
reference state; its predual is the corresponding coarse-graining channel.
The finite-dimensional bridge theorem is this tensor-factor instance, with
the recovery map kept separate from the expectation.

Petz sufficiency is formulated directly for subalgebras, and crossed-product
constructions provide type-$\mathrm{II}$ algebras with traces and generalized
entropies in gravitational subregions~\cite{Petz,LeutheusserLiu,WittenCrossed,CPW}.
The present proofs use finite-dimensional factors. Extending them to
type-$\mathrm{II}$ and type-$\mathrm{III}$ algebras requires a tester distance
for energy-bounded local operations and a continuity estimate controlling the
induced multi-time functional in a completely bounded or operational norm.

\subsection{Open calculations and extensions}
\label{sec:outlook}

\paragraph{The joint replica state.}
During the mixed-saddle interval, separate minimization of the four entropies
can combine saddles that do not arise from one joint replica state. The
parity-record cap therefore gives a sharp diagnostic. Persistence of the
excursion above one bit would reveal fine-grained gravitational correlations
among the time bins, whereas suppression would identify it with the
compatibility defect $\eta_{\mathrm{QES}}^{(3)}$. A joint replica calculation
for the time-binned diary background, naturally related to end-of-world-brane
models of replica wormholes~\cite{PSSY}, should evaluate all four regions in
one replicated geometry. Its output is the physical crossover curve, the
ordering of the compatible QES exchanges, and the tolerance-resolved
completion time of the process transition.

\paragraph{The one-shot QES bridge.}
Equation~\eqref{eq:memory-budget-conjecture} reduces the geometric problem to a
specific information-theoretic comparison: estimate
$H_{\max}^{\epsilon'}(M_{\min}|R)$ from the generalized-entropy data of the
competing saddles. A proof requires a one-shot version of predictive
reconstruction in which the code-subspace error, the smoothing parameter, and
the instrument-dependent continuity modulus appear in the same bound. Such a
result would determine whether the area term controls the entire predictive
memory or only the part visible to a selected family of late-time operations.
It would also place the process wedge relative to the min and max entanglement
wedges at matched tolerances.

\paragraph{Complexity-filtered prediction.}
Definition~\ref{def:pw} allows an unrestricted simulator. Restricting the
circuit complexity of $\mathcal S_{R\to\widehat X}$ defines a family
$\mathrm{PW}_{\mathcal C}^{(K),\varepsilon}$, while restricting the tester
complexity defines a separate notion of experimentally resolvable influence.
Black-hole decoding and python's-lunch geometry suggest that the island's
predictive influence may become simulable only above a parametrically large
complexity~\cite{HarlowHayden,PythonsLunch}. The information-theoretic Page
transition and the computationally accessible transition can therefore occur
at distinct operational thresholds. Determining the simulator threshold, the
tester threshold, and their relation to nonminimal QESs would add a
computational axis to the temporal-resolution filtration and define a
complexity-restricted version of $\chi_{\mathrm{island}}$.

\paragraph{Geometric, multipartite, and algebraic extensions.}
The covariant construction carries over whenever the radiation bins are
associated with causal diamonds along an appropriate collection worldline.
Rotating and Kerr evaporation test how frame dragging and superradiant sectors
modify the predictive register; wedge-holographic realizations provide a brane
interpretation of the process wedge; and distributed radiation registers turn
the conditioning problem into a multipartite simulation and secret-sharing
task~\cite{KerrIsland,RotatingBTZIsland,WedgeHolography,SecretSharing}. The
algebraic extension replaces finite registers by energy-bounded local
operations on subregion algebras. In each setting the geometry and admissible
operations change, while $\chi_{\mathrm{island}}$ continues to measure the
temporal memory removed by access to the corresponding recovery algebra.

Across the Page transition the late radiation continues to carry temporal
memory, while the early radiation becomes sufficient to screen its predictive
influence. The entropy curve identifies the reorganization of the encoding;
the process observable resolves the temporal order at which that encoding
controls future experiments. This is the operational content of the Page
transition as a Markovianization transition.

\section*{Data and code availability}
The data and code supporting this work are publicly available on Zenodo under
DOI~\url{https://doi.org/10.5281/zenodo.22813260}.

\bibliographystyle{JHEP}
\bibliography{biblio.bib}

@article{priorpaper,
  author        = {Tushar Waghmare},
  title         = {Covariant Measures of Non-{M}arkovianity in Curved Spacetime},
  journal       = {Class. Quantum Grav.},
  doi           = {10.1088/1361-6382/ae7d7c},
  volume        = {43},
  number        = {13},
  year          = {2026},
  pages         = {135003},
  eprint        = {2511.15365},
  archivePrefix = {arXiv},
  primaryClass  = {hep-th},
  url           = {https://doi.org/10.1088/1361-6382/ae7d7c},
}

@article{Pollock,
  author        = {Pollock, Felix A. and Rodr{\'\i}guez-Rosario, C{\'e}sar and
                   Frauenheim, Thomas and Paternostro, Mauro and Modi, Kavan},
  title         = {Non-{M}arkovian quantum processes: complete framework and
                   efficient characterization},
  journal       = {Phys. Rev. A},
  volume        = {97},
  pages         = {012127},
  year          = {2018},
  eprint        = {1512.00589},
  archivePrefix = {arXiv},
  primaryClass  = {quant-ph},
  doi           = {10.1103/PhysRevA.97.012127},
  url           = {https://doi.org/10.1103/PhysRevA.97.012127},
}

@article{Taranto,
  author        = {Taranto, Philip and Pollock, Felix A. and Milz, Simon and
                   Tomamichel, Marco and Modi, Kavan},
  title         = {Quantum {M}arkov Order},
  journal       = {Phys. Rev. Lett.},
  volume        = {122},
  pages         = {140401},
  year          = {2019},
  eprint        = {1805.11341},
  archivePrefix = {arXiv},
  primaryClass  = {quant-ph},
  doi           = {10.1103/PhysRevLett.122.140401},
  url           = {https://doi.org/10.1103/PhysRevLett.122.140401},
}

@article{CDP,
  author        = {Chiribella, Giulio and D'Ariano, Giacomo Mauro and
                   Perinotti, Paolo},
  title         = {Theoretical framework for quantum networks},
  journal       = {Phys. Rev. A},
  volume        = {80},
  pages         = {022339},
  year          = {2009},
  eprint        = {0904.4483},
  archivePrefix = {arXiv},
  primaryClass  = {quant-ph},
  doi           = {10.1103/PhysRevA.80.022339},
  url           = {https://doi.org/10.1103/PhysRevA.80.022339},
}

@inproceedings{GutoskiWatrous,
  author        = {Gutoski, Gus and Watrous, John},
  title         = {Toward a general theory of quantum games},
  booktitle     = {Proceedings of the 39th Annual ACM Symposium on Theory of
                   Computing (STOC)},
  pages         = {565--574},
  year          = {2007},
  eprint        = {quant-ph/0611234},
  archivePrefix = {arXiv},
  doi           = {10.1145/1250790.1250873},
  url           = {https://doi.org/10.1145/1250790.1250873},
}

@article{FawziRenner,
  author        = {Fawzi, Omar and Renner, Renato},
  title         = {Quantum conditional mutual information and approximate
                   {M}arkov chains},
  journal       = {Commun. Math. Phys.},
  volume        = {340},
  number        = {2},
  pages         = {575--611},
  year          = {2015},
  eprint        = {1410.0664},
  archivePrefix = {arXiv},
  primaryClass  = {quant-ph},
  doi           = {10.1007/s00220-015-2466-x},
  url           = {https://doi.org/10.1007/s00220-015-2466-x},
}

@article{JungeUniversal,
  author        = {Junge, Marius and Renner, Renato and Sutter, David and
                   Wilde, Mark M. and Winter, Andreas},
  title         = {Universal recovery maps and approximate sufficiency of
                   quantum relative entropy},
  journal       = {Ann. Henri Poincar\'e},
  volume        = {19},
  number        = {10},
  pages         = {2955--2978},
  year          = {2018},
  eprint        = {1509.07127},
  archivePrefix = {arXiv},
  primaryClass  = {quant-ph},
  doi           = {10.1007/s00023-018-0716-0},
  url           = {https://doi.org/10.1007/s00023-018-0716-0},
}

@article{Petz,
  author        = {Petz, D{\'e}nes},
  title         = {Sufficient subalgebras and the relative entropy of states of
                   a von {N}eumann algebra},
  journal       = {Commun. Math. Phys.},
  volume        = {105},
  number        = {1},
  pages         = {123--131},
  year          = {1986},
  doi           = {10.1007/BF01212345},
  url           = {https://doi.org/10.1007/BF01212345},
}

@article{HaydenPreskill,
  author        = {Hayden, Patrick and Preskill, John},
  title         = {Black holes as mirrors: quantum information in random
                   subsystems},
  journal       = {JHEP},
  volume        = {09},
  pages         = {120},
  year          = {2007},
  eprint        = {0708.4025},
  archivePrefix = {arXiv},
  primaryClass  = {hep-th},
  doi           = {10.1088/1126-6708/2007/09/120},
  url           = {https://doi.org/10.1088/1126-6708/2007/09/120},
}

@article{Penington,
  author        = {Penington, Geoffrey},
  title         = {Entanglement wedge reconstruction and the information paradox},
  journal       = {JHEP},
  volume        = {09},
  pages         = {002},
  year          = {2020},
  eprint        = {1905.08255},
  archivePrefix = {arXiv},
  primaryClass  = {hep-th},
  doi           = {10.1007/JHEP09(2020)002},
  url           = {https://doi.org/10.1007/JHEP09(2020)002},
}

@article{AEMM,
  author        = {Almheiri, Ahmed and Engelhardt, Netta and Marolf, Donald and
                   Maxfield, Henry},
  title         = {The entropy of bulk quantum fields and the entanglement wedge
                   of an evaporating black hole},
  journal       = {JHEP},
  volume        = {12},
  pages         = {063},
  year          = {2019},
  eprint        = {1905.08762},
  archivePrefix = {arXiv},
  primaryClass  = {hep-th},
  doi           = {10.1007/JHEP12(2019)063},
  url           = {https://doi.org/10.1007/JHEP12(2019)063},
}

@article{LeutheusserLiu,
  author        = {Leutheusser, Samuel and Liu, Hong},
  title         = {Emergent times in holographic duality},
  journal       = {Phys. Rev. D},
  volume        = {108},
  number        = {8},
  year          = {2023},
  pages         = {086020},
  eprint        = {2112.12156},
  archivePrefix = {arXiv},
  primaryClass  = {hep-th},
  doi           = {10.1103/PhysRevD.108.086020},
  url           = {https://doi.org/10.1103/PhysRevD.108.086020},
}

@article{WittenCrossed,
  author        = {Witten, Edward},
  title         = {Gravity and the crossed product},
  journal       = {JHEP},
  volume        = {10},
  year          = {2022},
  pages         = {008},
  eprint        = {2112.12828},
  archivePrefix = {arXiv},
  primaryClass  = {hep-th},
  doi           = {10.1007/JHEP10(2022)008},
  url           = {https://doi.org/10.1007/JHEP10(2022)008},
}

@article{CPW,
  author        = {Chandrasekaran, Venkatesa and Penington, Geoff and Witten, Edward},
  title         = {Large {N} algebras and generalized entropy},
  journal       = {JHEP},
  volume        = {04},
  year          = {2023},
  pages         = {009},
  eprint        = {2209.10454},
  archivePrefix = {arXiv},
  primaryClass  = {hep-th},
  doi           = {10.1007/JHEP04(2023)009},
  url           = {https://doi.org/10.1007/JHEP04(2023)009},
}

@article{AkersPenington,
  author        = {Akers, Chris and Penington, Geoff},
  title         = {Leading order corrections to the quantum extremal surface
                   prescription},
  journal       = {JHEP},
  volume        = {04},
  year          = {2021},
  pages         = {062},
  eprint        = {2008.03319},
  archivePrefix = {arXiv},
  primaryClass  = {hep-th},
  doi           = {10.1007/JHEP04(2021)062},
  url           = {https://doi.org/10.1007/JHEP04(2021)062},
}

@article{OneShotHolography,
  author        = {Akers, Chris and Levine, Adam and Penington, Geoff and
                   Wildenhain, Elizabeth},
  title         = {One-shot holography},
  journal       = {SciPost Phys.},
  volume        = {16},
  year          = {2024},
  pages         = {144},
  eprint        = {2307.13032},
  archivePrefix = {arXiv},
  primaryClass  = {hep-th},
  doi           = {10.21468/SciPostPhys.16.6.144},
  url           = {https://doi.org/10.21468/SciPostPhys.16.6.144},
}

@mastersthesis{Berta,
  author        = {Berta, Mario},
  title         = {Single-shot quantum state merging},
  type          = {Diploma thesis},
  school        = {ETH Zurich},
  year          = {2008},
  eprint        = {0912.4495},
  archivePrefix = {arXiv},
  primaryClass  = {quant-ph},
  url           = {https://arxiv.org/abs/0912.4495},
}

@article{BisioMemory,
  author        = {Bisio, Alessandro and D'Ariano, Giacomo Mauro and
                   Perinotti, Paolo and Sedl\'ak, Michal},
  title         = {Memory cost of quantum protocols},
  journal       = {Phys. Rev. A},
  volume        = {85},
  number        = {3},
  year          = {2012},
  pages         = {032333},
  eprint        = {1112.3853},
  archivePrefix = {arXiv},
  primaryClass  = {quant-ph},
  doi           = {10.1103/PhysRevA.85.032333},
  url           = {https://doi.org/10.1103/PhysRevA.85.032333},
}

@article{JafferisKolchmeyer,
  author        = {Jafferis, Daniel Louis and Kolchmeyer, David K.},
  title         = {Entanglement entropy in Jackiw-Teitelboim gravity},
  year          = {2019},
  eprint        = {1911.10663},
  archivePrefix = {arXiv},
  primaryClass  = {hep-th},
  url           = {https://arxiv.org/abs/1911.10663},
}

@article{SecretSharing,
  author        = {Balasubramanian, Vijay and Kar, Arjun and Parrikar, Onkar and
                   S\'arosi, G\'abor and Ugajin, Tomonori},
  title         = {Geometric secret sharing in a model of Hawking radiation},
  journal       = {JHEP},
  volume        = {01},
  year          = {2021},
  pages         = {177},
  eprint        = {2003.05448},
  archivePrefix = {arXiv},
  primaryClass  = {hep-th},
  doi           = {10.1007/JHEP01(2021)177},
  url           = {https://doi.org/10.1007/JHEP01(2021)177},
}

@article{WedgeHolography,
  author        = {Akal, Ibrahim and Kusuki, Yuya and Takayanagi, Tadashi and
                   Wei, Zixia},
  title         = {Codimension-two holography for wedges},
  journal       = {Phys. Rev. D},
  volume        = {102},
  number        = {12},
  year          = {2020},
  pages         = {126007},
  eprint        = {2007.06800},
  archivePrefix = {arXiv},
  primaryClass  = {hep-th},
  doi           = {10.1103/PhysRevD.102.126007},
  url           = {https://doi.org/10.1103/PhysRevD.102.126007},
}

@article{KerrIsland,
  author        = {Wang, Liqiang and Li, Ran},
  title         = {Entanglement islands and the Page curve of Hawking radiation for
                   rotating Kerr black holes},
  journal       = {Phys. Rev. D},
  volume        = {110},
  number        = {6},
  year          = {2024},
  pages         = {066012},
  eprint        = {2406.13949},
  archivePrefix = {arXiv},
  primaryClass  = {hep-th},
  doi           = {10.1103/PhysRevD.110.066012},
  url           = {https://doi.org/10.1103/PhysRevD.110.066012},
}

@article{RotatingBTZIsland,
  author        = {Yu, Ming-Hui and Lu, Cheng-Yuan and Ge, Xian-Hui and Sin, Sang-Jin},
  title         = {Island, Page curve, and superradiance of rotating {BTZ} black holes},
  journal       = {Phys. Rev. D},
  volume        = {105},
  number        = {6},
  year          = {2022},
  pages         = {066009},
  eprint        = {2112.14361},
  archivePrefix = {arXiv},
  primaryClass  = {hep-th},
  doi           = {10.1103/PhysRevD.105.066009},
  url           = {https://doi.org/10.1103/PhysRevD.105.066009},
}

@article{Hollowood:2020cou,
  author        = {Hollowood, Timothy J. and Kumar, S. Prem and Legramandi, Andrea},
  title         = {Hawking radiation correlations of evaporating black holes in JT gravity},
  journal       = {J. Phys. A},
  volume        = {53},
  number        = {47},
  pages         = {475401},
  year          = {2020},
  doi           = {10.1088/1751-8121/abbc51},
  eprint        = {2007.04877},
  archivePrefix = {arXiv},
  primaryClass  = {hep-th},
  url           = {https://doi.org/10.1088/1751-8121/abbc51},
}

@article{Hollowood:2020couPage,
  author        = {Hollowood, Timothy J. and Kumar, S. Prem},
  title         = {Islands and Page curves for evaporating black holes in JT gravity},
  journal       = {JHEP},
  volume        = {08},
  pages         = {094},
  year          = {2020},
  doi           = {10.1007/JHEP08(2020)094},
  eprint        = {2004.14944},
  archivePrefix = {arXiv},
  primaryClass  = {hep-th},
  url           = {https://doi.org/10.1007/JHEP08(2020)094},
}

@article{PSSY,
  author        = {Penington, Geoff and Shenker, Stephen H. and Stanford, Douglas and
                   Yang, Zhenbin},
  title         = {Replica wormholes and the black hole interior},
  journal       = {JHEP},
  volume        = {03},
  pages         = {205},
  year          = {2022},
  eprint        = {1911.11977},
  archivePrefix = {arXiv},
  primaryClass  = {hep-th},
  doi           = {10.1007/JHEP03(2022)205},
  url           = {https://doi.org/10.1007/JHEP03(2022)205},
}

@article{HarlowHayden,
  author        = {Harlow, Daniel and Hayden, Patrick},
  title         = {Quantum computation vs. firewalls},
  journal       = {JHEP},
  volume        = {06},
  pages         = {085},
  year          = {2013},
  eprint        = {1301.4504},
  archivePrefix = {arXiv},
  primaryClass  = {hep-th},
  doi           = {10.1007/JHEP06(2013)085},
  url           = {https://doi.org/10.1007/JHEP06(2013)085},
}

@article{PythonsLunch,
  author        = {Brown, Adam R. and Gharibyan, Hrant and Penington, Geoff and
                   Susskind, Leonard},
  title         = {The {P}ython's {L}unch: geometric obstructions to decoding
                   {H}awking radiation},
  journal       = {JHEP},
  volume        = {08},
  pages         = {121},
  year          = {2020},
  eprint        = {1912.00228},
  archivePrefix = {arXiv},
  primaryClass  = {hep-th},
  doi           = {10.1007/JHEP08(2020)121},
  url           = {https://doi.org/10.1007/JHEP08(2020)121},
}

@article{PageInformation,
  author        = {Page, Don N.},
  title         = {Information in black hole radiation},
  journal       = {Phys. Rev. Lett.},
  volume        = {71},
  pages         = {3743--3746},
  year          = {1993},
  eprint        = {hep-th/9306083},
  archivePrefix = {arXiv},
  doi           = {10.1103/PhysRevLett.71.3743},
  url           = {https://doi.org/10.1103/PhysRevLett.71.3743},
}

@article{EngelhardtWall,
  author        = {Engelhardt, Netta and Wall, Aron C.},
  title         = {Quantum extremal surfaces: holographic entanglement entropy
                   beyond the classical regime},
  journal       = {JHEP},
  volume        = {01},
  pages         = {073},
  year          = {2015},
  eprint        = {1408.3203},
  archivePrefix = {arXiv},
  primaryClass  = {hep-th},
  doi           = {10.1007/JHEP01(2015)073},
  url           = {https://doi.org/10.1007/JHEP01(2015)073},
}

@article{AMMZ,
  author        = {Almheiri, Ahmed and Mahajan, Raghu and Maldacena, Juan and
                   Zhao, Ying},
  title         = {The Page curve of {H}awking radiation from semiclassical geometry},
  journal       = {JHEP},
  volume        = {03},
  pages         = {149},
  year          = {2020},
  eprint        = {1908.10996},
  archivePrefix = {arXiv},
  primaryClass  = {hep-th},
  doi           = {10.1007/JHEP03(2020)149},
  url           = {https://doi.org/10.1007/JHEP03(2020)149},
}

@article{CotlerUniversal,
  author        = {Cotler, Jordan and Hayden, Patrick and Penington, Geoffrey and
                   Salton, Grant and Swingle, Brian and Walter, Michael},
  title         = {Entanglement wedge reconstruction via universal recovery channels},
  journal       = {Phys. Rev. X},
  volume        = {9},
  number        = {3},
  pages         = {031011},
  year          = {2019},
  eprint        = {1704.05839},
  archivePrefix = {arXiv},
  primaryClass  = {hep-th},
  doi           = {10.1103/PhysRevX.9.031011},
  url           = {https://doi.org/10.1103/PhysRevX.9.031011},
}

\appendix
\clearpage
\section*{Appendix}
\addcontentsline{toc}{section}{Appendix}

\appendix

\section{Finite-order constructions}
\label{app:finiteorder}

This appendix gives the explicit constructions underlying the finite-order
statements in the main text.  The parity comb establishes the separation
between low-order marginals and the full multi-time process, its conditioned
extension fixes the endpoints of the island witness, and its causal dilation
identifies the minimal predictive memory that enters the conditional memory
cost.

\subsection{The parity comb and its reduced processes}
\label{app:parity}

Let $X$ and $Y$ be independent uniform bits and define
\begin{equation}
  b_1=X,
  \qquad
  b_2=Y,
  \qquad
  b_3=X\oplus Y.
  \label{eq:app-parity-map}
\end{equation}
A measure-and-prepare realization carries the classical record
$c=(X,Y)$ and uses, at slot $i$, the completely positive map
$\mathcal E_i^{(c)}$ that discards its input and prepares
$|b_i(c)\rangle\langle b_i(c)|$.  The resulting process is
\begin{equation}
  \Upsilon_L
  =\sum_{c\in\{0,1\}^2}p(c)
    \bigotimes_{i=1}^{3}\mathcal E_i^{(c)},
  \qquad
  p(c)=\frac14.
  \label{eq:app-parity-comb}
\end{equation}
The common label makes the full comb non-Markovian even though every
conditional component in the mixture is a product process.

For a subset $S\subseteq\{1,2,3\}$, the reduced comb is obtained by applying
the trace instrument at the complementary slots,
\begin{equation}
  \Upsilon_L\big|_S
  =\sum_c p(c)\bigotimes_{i\in S}\mathcal E_i^{(c)}.
  \label{eq:app-reduced-comb}
\end{equation}
Each single variable is uniform.  Every pair is also independent and uniform:
for example,
\begin{equation}
  p(b_1=x,b_3=z)
  =\sum_y \frac14\,\mathbf 1[z=x\oplus y]
  =\frac14,
  \label{eq:app-pairwise-independent}
\end{equation}
and the same calculation applies to the other pairs.  Hence, for
$|S|\leq2$,
\begin{equation}
  \Upsilon_L\big|_S
  =\Lambda\big|_S,
  \label{eq:app-low-order-equality}
\end{equation}
where $\Lambda$ is the factorized Markov comb that prepares an independent
maximally mixed bit at every slot.  Equality holds as a multilinear
functional, not only as an equality of unmeasured output states.  Each $\mathcal E_i^{(c)}$ discards the incoming system, so an intervention at
one slot leaves the preparation at every other slot invariant; adaptive
choices therefore do not distinguish the two reduced combs.

At order three the mixture does not factorize.  Its outcome distribution is
\begin{equation}
  p(b_1,b_2,b_3)
  =\frac14\,
    \mathbf 1[b_1\oplus b_2\oplus b_3=0],
  \label{eq:app-parity-sector}
\end{equation}
which differs from the uniform product.  Equations \eqref{eq:app-low-order-equality} and
\eqref{eq:app-parity-sector} establish items 1 and 2 of the $k_\ast=3$
construction in Theorem~\ref{thm:blindness};
Appendix~\ref{app:endpoints} completes item 3.

\subsection{Parity memory at arbitrary order}
\label{app:higherparity}

The same separation can be placed at any prescribed order $K_0\geq3$.  Let
$X_1,\ldots,X_{K_0-1}$ be independent uniform bits and take
\begin{equation}
  b_i=X_i\quad(1\leq i<K_0),
  \qquad
  b_{K_0}=\bigoplus_{i=1}^{K_0-1}X_i.
  \label{eq:app-higher-parity}
\end{equation}
The full distribution is uniform on one parity sector.  Every proper marginal
is uniform: if a variable $X_j$ is omitted, summing over it maps each fixed
configuration of the retained bins to exactly one value of the parity bit and
produces the factor $1/2$ required by the uniform distribution.  If the parity
bin itself is omitted, the remaining variables are independent by
construction.  Since the measure-and-prepare maps discard their inputs, the
same statement holds for every reduced process as a multilinear functional.
Thus
\begin{equation}
  \Upsilon_L\big|_S=\Lambda\big|_S
  \quad\text{for every }|S|<K_0,
  \qquad
  \Upsilon_L\neq\Lambda
  \quad\text{at }|S|=K_0.
  \label{eq:app-higher-order-separation}
\end{equation}
A causal implementation stores a running parity bit.  After slot $j$, the
memory is $P_j=X_1\oplus\cdots\oplus X_j$; the next fresh seed updates
$P_j\mapsto P_{j+1}$.  One bit therefore suffices for every $K_0$, while the
first visible and saturation orders are both $K_0$ for this single-order
family.

\subsection{Endpoint values and the conditioned witness}
\label{app:endpoints}

For the three-bin parity distribution,
\begin{equation}
  W_\varnothing^{(3)}
  =I(b_1:b_3\mid b_2).
  \label{eq:app-unconditioned-witness}
\end{equation}
Because $b_1$ and $b_3$ are individually uniform conditional on $b_2$, while
the pair $(b_1,b_3)$ is fixed by one uniform bit,
\begin{equation}
  H(b_1\mid b_2)=1,
  \qquad
  H(b_3\mid b_2)=1,
  \qquad
  H(b_1b_3\mid b_2)=1,
\end{equation}
and hence
\begin{equation}
  W_\varnothing^{(3)}=1\ \mathrm{bit}.
  \label{eq:app-one-bit-witness}
\end{equation}
For passive classical statistics and relative entropy, this conditional mutual
information is exactly the distance to the nearest conditional product
$p(b_1,b_2)p(b_3\mid b_2)$.  For other operational divergences it remains a
computable diagnostic of the same order-three correlation.

Let the early register encode $X$ through two equiprobable pure states
$|r_0\rangle$ and $|r_1\rangle$ with overlap
$s=|\langle r_0|r_1\rangle|$.  The conditional witness is
\begin{equation}
  W_R^{(3)}
  =I(X:X\oplus Y\mid YR)
  =H(X\mid R).
  \label{eq:app-conditioned-parity-reduction}
\end{equation}
For the binary classical-quantum state
$\rho_{XR}=\frac12\sum_x|x\rangle\langle x|\otimes
|r_x\rangle\langle r_x|$, the average state on $R$ has eigenvalues
$(1\pm s)/2$.  Therefore
\begin{equation}
  H(X\mid R)
  =1-h_2\!\left(\frac{1+s}{2}\right),
  \qquad
  \chi_{\mathrm{island}}^{\mathrm{wit},(3)}
  =h_2\!\left(\frac{1+s}{2}\right).
  \label{eq:app-conditioned-parity-closed}
\end{equation}
At the no-recovery endpoint $s=1$, the early register is independent of the
predictive bit and $W_R^{(3)}=W_\varnothing^{(3)}=1$.  At exact recovery,
$s=0$, the states are orthogonal and $W_R^{(3)}=0$.  The late-only witness is
independent of the recovery coordinate, so the two endpoints give
\begin{equation}
  \chi_{\mathrm{island}}^{\mathrm{wit},(3)}=0
  \quad\text{before recovery},
  \qquad
  \chi_{\mathrm{island}}^{\mathrm{wit},(3)}=1\ \mathrm{bit}
  \quad\text{after exact recovery}.
  \label{eq:app-witness-endpoints}
\end{equation}
The two-bin screening core follows from the same calculation with
$I(b_1:b_2\mid R)$ in place of \eqref{eq:app-conditioned-parity-reduction}.

\subsection{The minimal screening register}
\label{app:screening}

The static record $(X,Y)$ contains two bits, while the causal dilation needs
only one bit of predictive memory.  After emitting $b_1=X$, the dilation
retains $X$.  At the second slot it draws the fresh seed $Y=b_2$ and updates
the carried memory to $X\oplus Y$, which is emitted as $b_3$.  Conditioning on
$X$ screens the distant past,
\begin{equation}
  I(b_1:b_3\mid b_2,X)=0.
  \label{eq:app-X-screens}
\end{equation}
By contrast, supplying the middle value $Y$ as an additional register adds no
information beyond the conditioning bin itself,
\begin{equation}
  I(b_1:b_3\mid b_2,Y)=1\ \mathrm{bit}.
  \label{eq:app-Y-does-not-screen}
\end{equation}
The trivial one-dimensional auxiliary register also leaves the one-bit witness
unchanged.  Hence the minimal predictive memory has dimension two and
\begin{equation}
  m_\epsilon(L\mid R)=1\ \mathrm{bit}
  \quad(t<t_{\mathrm{Page}}),
  \qquad
  m_\epsilon(L\mid R)=0
  \quad(t>t_{\mathrm{Page}}),
  \label{eq:app-memory-cost-endpoints}
\end{equation}
for every tolerance below the pre-Page one-bit residual and above the
post-Page reconstruction floor.  This establishes the $k_\ast=3$ case of
Proposition~\ref{prop:memcost} and exhibits explicitly the hierarchy
$\log_2\dim M_{\min}=1<\log_2\dim M=2$.

\section{Operational and recovery proofs}
\label{app:proofs}

\subsection{Discrimination interpretation}
\label{app:dual}

For a fixed pair of combs, a tester $T$ maps each comb to a state or classical
outcome distribution.  With normalized trace distance as the base divergence,
\begin{equation}
  d_{\mathrm{op}}^{\mathrm{tr}}(\Upsilon,\Lambda)
  =\sup_{T\in\mathcal I}
    \frac12\big\|T[\Upsilon]-T[\Lambda]\big\|_1,
  \label{eq:app-trace-comb-distance}
\end{equation}
and equal prior probabilities give
\begin{equation}
  p_{\mathrm{succ}}^{\mathrm{opt}}(\Upsilon,\Lambda)
  =\frac12\left(1+d_{\mathrm{op}}^{\mathrm{tr}}(\Upsilon,\Lambda)\right).
  \label{eq:app-success-probability}
\end{equation}
For the relative-entropy base divergence,
\begin{equation}
  d_{\mathrm{op}}^{\mathrm{rel}}(\Upsilon,\Lambda)
  =\sup_{T\in\mathcal I}
    D\big(T[\Upsilon]\|T[\Lambda]\big)
  \label{eq:app-relative-comb-distance}
\end{equation}
is the optimal type-II exponent available to the tester family in the
corresponding asymptotic discrimination problem.

The memory is the infimum of either pairwise quantity over the factorized
Markov-comb class,
\begin{equation}
  \mathcal N[\Upsilon]
  =\inf_{\Lambda\in\mathrm{CPD}}
    d_{\mathrm{op}}(\Upsilon,\Lambda).
  \label{eq:app-memory-optimization}
\end{equation}
The reference class is nonconvex, so no minimax exchange is required or used.
In the finite-dimensional truncations used here, the parameter spaces of the
initial state and stepwise CPTP maps are compact and the objective is
continuous; the infimum is therefore attained. We nevertheless retain
$\inf$ in the general definitions. The optimizing Markov comb for
$\Upsilon_L$ and the optimizing assisted Markov comb for
$\Upsilon_{L|R}$ are independent.  Consequently
\begin{equation}
  \chi_{\mathrm{island}}
  =\inf_{\Lambda\in\mathrm{CPD}}
      d_{\mathrm{op}}(\Upsilon_L,\Lambda)
   -\inf_{\Lambda_R\in\mathrm{CPD}_R}
      d_{\mathrm{op}}(\Upsilon_{L|R},\Lambda_R)
  \label{eq:app-difference-of-optima}
\end{equation}
is precisely the reduction in optimized distinguishability from the
Markovian reference set produced by access to $R$.  Equation
\eqref{eq:app-difference-of-optima}, rather than a single common optimizing
tester, is the content of Proposition~\ref{prop:opmeaning}.

\subsection{Continuity bound for the forward bridge}
\label{app:constant}

Assumption A0 supplies a fixed side-leg channel
$\mathcal G_{M\to MR}$, independent of the late seeds, for which
\begin{equation}
  d_{\mathrm{op}}\!\left(
    \Upsilon_{L|MR},
    \mathcal G_{M\to MR}[\Upsilon_{L|M}]
  \right)
  \leq\delta_{\mathrm{med}}
  \label{eq:app-mediation}
\end{equation}
on the code subspace. This is the step that excludes correlations revealed by
$R$ but not mediated by $M$.

Let
\begin{equation}
  \widetilde\rho_{ML}
  =(\mathcal R_{R\to M}\otimes\mathrm{id}_L)(\rho_{RL})
  \label{eq:app-recovered-conditioning-state}
\end{equation}
be the recovered joint conditioning state. The convention in~\eqref{eq:A2}
is the root fidelity
$F(\rho,\sigma)=\|\sqrt\rho\sqrt\sigma\|_1$. If
$1-F(\rho_{ML},\widetilde\rho_{ML})\leq\epsilon(C)$, the Fuchs--van de
Graaf inequalities give
\begin{equation}
  \frac12\|\rho_{ML}-\widetilde\rho_{ML}\|_1
  \leq\sqrt{1-F(\rho_{ML},\widetilde\rho_{ML})^2}
  \leq\sqrt{2\epsilon(C)}.
  \label{eq:app-fvg}
\end{equation}

Let $\Gamma_{n:0}$ map the conditioning state to the $n$-slot late comb through
the fixed causal dilation. If the recovered register is inserted once and
then carried through the dilation, data processing gives
\begin{equation}
  d_{\mathrm{op}}\bigl(
    \Gamma_{n:0}[\rho_{ML}],
    \Gamma_{n:0}[\widetilde\rho_{ML}]
  \bigr)
  \leq\sqrt{2\epsilon(C)}
  \label{eq:app-global-continuity}
\end{equation}
for normalized trace-distance testers. A conservative local-use hybrid bound
replaces the true memory one link at a time and gives
\begin{equation}
  d_{\mathrm{op}}\bigl(
    \Upsilon_{L|M},
    \Upsilon_{L|R}^{\mathrm{rec}}
  \bigr)
  \leq n\sqrt{2\epsilon(C)}.
  \label{eq:app-telescoping-continuity}
\end{equation}
Thus $c_{\mathcal I,n}\leq\sqrt2 n$, with the sharper $\sqrt2$ constant when
A2 controls the complete joint conditioning state supplied once. Other base
divergences are represented by their continuity modulus
$\omega_{\mathcal I,n}$.

Choose a factorized Markov comb $\Lambda_M$ satisfying
\begin{equation}
  d_{\mathrm{op}}(\Upsilon_{L|M},\Lambda_M)
  \leq\delta_M+\eta,
  \qquad \eta>0.
  \label{eq:app-near-markov-M}
\end{equation}
Pre-compose its static side leg with $\mathcal R_{R\to M}$ and, where needed,
the fixed mediated extension $\mathcal G_{M\to MR}$. The resulting reference
$\Lambda_R^{\mathrm{rec}}$ lies in $\mathrm{CPD}_R$: side-leg pre-processing
cannot couple distinct late slots and the factorized late maps are unchanged.
Using~\eqref{eq:app-mediation}, the recovered-process continuity bound, and
\eqref{eq:app-near-markov-M},
\begin{align}
  \mathcal N_R(L)
  &\leq
  d_{\mathrm{op}}(\Upsilon_{L|R},\Lambda_R^{\mathrm{rec}})
  \\
  &\leq
  \delta_{\mathrm{med}}+\delta_M+\eta
  +\omega_{\mathcal I,n}(\epsilon(C)).
  \label{eq:app-forward-bridge-chain}
\end{align}
Taking $\eta\to0$ proves~\eqref{eq:T3bound}. Without A0, the first comparison
in~\eqref{eq:app-forward-bridge-chain} fails: a side register can expose
late-bin correlations even when $M$ is trivial and the late-only comb is
Markovian.

\subsection{Predictive randomization and the partial converse}
\label{app:simulator}

The minimal memory $M_{\min}$ is defined up to predictive equivalence: two
states are identified when they induce the same statistics for every future
tester in $\mathcal I$. Let
$\Phi_{\mathrm{fut}|M_{\min}}$ and $\Phi_{\mathrm{fut}|R}$ be the corresponding
predictive channels and define the randomization deficiency
\begin{equation}
  \Delta_{\mathcal I}(R\to M_{\min})
  =\inf_{\mathcal R_{R\to M_{\min}}}
   \big\|
      \Phi_{\mathrm{fut}|M_{\min}}\circ\mathcal R_{R\to M_{\min}}
      -\Phi_{\mathrm{fut}|R}
    \big\|_{\mathcal I_{\mathrm{fut}}}.
  \label{eq:app-predictive-deficiency}
\end{equation}
The channel set is compact in finite dimension, so the infimum is attained.

Let $\mathsf V_{\mathrm{pred}}$ be the finite-dimensional quotient space of
predictive channels modulo predictive equivalence. Its dimension is bounded
by a function of $d_{M_{\min}}$ and the finite future horizon
$n_{\mathrm{fut}}$. Informational completeness makes the future-tester
seminorm a norm on this quotient. Norm equivalence gives a constant
\begin{equation}
  \kappa_{\mathcal I}
  =\kappa_{\mathcal I}(d_{M_{\min}},n_{\mathrm{fut}})
  \label{eq:app-kappa-dependence}
\end{equation}
relating it to a fixed Euclidean norm; no dimension-independent value is
claimed.

The images
$\Phi_{\mathrm{fut}|M_{\min}}\circ\mathcal R_{R\to M_{\min}}$ form a compact
convex subset of $\mathsf V_{\mathrm{pred}}$. If every simulator failed by
more than $\eta$, Hahn--Banach separation would give a normalized future
decision functional whose advantage is at least $\eta/\kappa_{\mathcal I}$.
Assumptions B1 and B2 embed this functional into a centered past--future
memory witness $W$: all future-affecting information factors through
$M_{\min}$, and the centered functional obeys
\begin{equation}
  W[\Lambda_R]=0
  \qquad\text{for every }\Lambda_R\in\mathrm{CPD}_R.
  \label{eq:app-witness-annihilates-cpd}
\end{equation}
Thus the witness separates the assisted process from the entire factorized
Markov class, rather than from one selected reference.

Suppose the base operational divergence satisfies a Pinsker-type quadratic
lower bound
\begin{equation}
  d_{\mathrm{op}}(\Upsilon,\Lambda)
  \geq\alpha_{\mathcal I}
  \|\Upsilon-\Lambda\|_{\mathcal I}^2.
  \label{eq:app-quadratic-divergence}
\end{equation}
Relative entropy has this form after the tester optimization, and the same
square-root dependence follows from approximate-sufficiency recovery
estimates~\cite{FawziRenner,JungeUniversal}. Equations
\eqref{eq:app-witness-annihilates-cpd} and
\eqref{eq:app-quadratic-divergence} imply
\begin{equation}
  \mathcal N_R(L)
  \geq
  \alpha_{\mathcal I}
  \frac{\Delta_{\mathcal I}(R\to M_{\min})^2}
       {\kappa_{\mathcal I}(d_{M_{\min}},n_{\mathrm{fut}})^2}.
  \label{eq:app-deficiency-lower-bound}
\end{equation}
Consequently,
\begin{equation}
  \mathcal N_R(L)\leq\epsilon
  \quad\Longrightarrow\quad
  \Delta_{\mathcal I}(R\to M_{\min})
  \leq
  \kappa_{\mathcal I}(d_{M_{\min}},n_{\mathrm{fut}})
  \sqrt{\epsilon/\alpha_{\mathcal I}}.
  \label{eq:app-randomization-bound}
\end{equation}
For normalized trace-distance memory, the corresponding separation gives a
linear modulus instead of the square-root modulus. Selecting the minimizing
channel in~\eqref{eq:app-predictive-deficiency} gives the simulator in
Theorem~\ref{thm:converse}.

The statement concerns influence rather than microscopic state recovery. If
the map from memory states to future statistics is injective on the code
subspace, predictive equivalence is trivial and the simulator reconstructs
the state of $M_{\min}$. In general it reconstructs the unique predictive
class with the required future action.

\subsection{Predictive compression versus microscopic recovery}
\label{app:predictivecompression}

A simple extension of the screening core illustrates the distinction.  Let
$M=Z\otimes Q$, where the pointer $Z$ controls all late emissions and the
spectator $Q$ never enters the predictive channel.  The early register contains
an exact copy of $Z$, while $Q$ remains maximally entangled with an inaccessible
reference $A_Q$:
\begin{equation}
  |\Psi\rangle
  =\frac1{\sqrt2}\sum_{z=0}^{1}
    |z\rangle_{A_Z}|z\rangle_Z|z\rangle_R
    |z\rangle_{b_1}|z\rangle_{b_2}
    \otimes|\Phi_{d_Q}\rangle_{QA_Q}.
  \label{eq:app-pointer-spectator-state}
\end{equation}
The minimal predictive memory is $M_{\min}=Z$.  Reading $Z$ from $R$ reproduces
all future statistics exactly, while no channel on $R$ can recover the
entanglement between $Q$ and $A_Q$.  For every product state
$\rho_{\widehat Q}\otimes\sigma_{A_Q}$, the root fidelity obeys
\begin{equation}
  F\bigl(
    |\Phi_{d_Q}\rangle,
    \rho_{\widehat Q}\otimes\sigma_{A_Q}
  \bigr)
  \leq\frac1{\sqrt{d_Q}}.
  \label{eq:app-spectator-fidelity}
\end{equation}
In the actual channel construction the inaccessible marginal remains
$\tau_{A_Q}=\mathbf 1/d_Q$, for which the sharper value is $1/d_Q$.
This example is not a strict process-wedge inclusion theorem, because the
operational object is the predictive class $[M]=[Z]$.  It instead explains why
the converse and the memory cost are formulated using $M_{\min}$: future
simulation can be exact even when an enlarged microscopic register contains
spectator degrees of freedom that remain unrecoverable.

\section{Numerical methods and certification}
\label{app:numerics}

\subsection{Tester families and finite-dimensional parameterization}
\label{app:testerfamilies}

The numerical model represents each late bin by a qubit and each instrument by
Choi matrices satisfying positivity and trace-preservation constraints.  The
three families used in Figure~\ref{fig:robustness}(a) are:
\begin{enumerate}
  \item the computational detector, whose projective effects are aligned with
        the controlled-emission basis;
  \item the tilted detector, obtained by a fixed single-qubit rotation before
        the same readout;
  \item the unrestricted finite-dimensional family, optimized over the
        admitted instrument parameterization and retained ancillas.
\end{enumerate}
For a fixed tester, linking its Choi operator with the comb produces a finite
outcome distribution.  The passive trace-distance objective is therefore a
total-variation distance.  Coherent instruments retain the same linear link
product, followed by the trace norm of the final tester state.

\subsection{Distances, lower bounds, and numerical brackets}
\label{app:numericalbrackets}

The optimization contains an infimum over factorized Markov references and a
supremum over testers, so pairwise and memory bounds must be distinguished. A
feasible Markov comb $\Lambda$ gives
\begin{equation}
  \mathcal N[\Upsilon]
  \leq d_{\mathrm{op}}(\Upsilon,\Lambda),
\end{equation}
and a certified optimization over testers for that fixed reference sharpens
this upper bound. By contrast, a lower bound on $\mathcal N$ requires either a
witness functional that vanishes on every element of $\mathrm{CPD}$, such as
the parity statistic, or a certified minimization over the complete reference
parameterization. A multistart tester search improves a lower bound on a
pairwise distance; it becomes a lower bound on the memory only when accompanied
by such a universal Markov witness or a certified reference minimization.

We report a bracket $[L,U]$ only when
\begin{equation}
  L\leq\mathcal N\leq U,
  \qquad
  U-L\leq\epsilon_{\mathrm{num}},
  \label{eq:app-bracket-condition}
\end{equation}
and the witness, relaxation, covering, or reference-minimization certificate is
stored with the data. For grid-certified curves, let $Q_h$ be the value on
mesh spacing $h$. The reported discretization error is bounded by
\begin{equation}
  \epsilon_{\mathrm{grid}}
  =|Q_h-Q_{h/2}|+L_Qh/2,
  \label{eq:app-grid-error}
\end{equation}
where $L_Q$ is a Lipschitz bound for the objective on the final cell. Closed
form curves, including the erasure path and parity-order step, have zero
discretization error.

\subsection{Code-subspace and temporal-order scans}
\label{app:scans}

The recovery scan uses
$d_{R,\mathrm{eff}}(t)=2^{\rho t}$ and the overload
\begin{equation}
  \mu(t,C)=\rho t-\log_2d_{M_{\min}}-\log_2\dim C.
  \label{eq:app-overload-scan}
\end{equation}
The reconstruction error and decoder coordinate are then evaluated from
\eqref{eq:recovery-calibration}.  The threshold
\begin{equation}
  t_{\mathrm{dec}}(C)
  =\frac{\log_2d_{M_{\min}}+\log_2\dim C}{\rho}
  \label{eq:app-decoding-threshold}
\end{equation}
is not fitted separately for each curve.  Increasing the code subspace changes
only the overload and the corresponding recovery error.

For order scans, the reduced combs are constructed by tracing every subset of
slots outside the selected span.  In the order-$K_0$ parity family,
\eqref{eq:app-higher-order-separation} provides an analytic certificate that
all $K<K_0$ values vanish.  At $K_0$, evaluating the parity statistic supplies
a nonzero lower bound, and the known running-parity dilation fixes the
saturation order.  For a general numerical comb the criterion is instead the
windowed residual~\eqref{eq:order-residual}; both constituent memories are
checked separately before their difference is declared saturated.

\subsection{Conditional memory-cost certification}
\label{app:memorycostnumerics}

A candidate auxiliary register $M'$ and a channel inserting it into the late
dilation provide an upper bound on the infimum in~\eqref{eq:mcost}.  For each
candidate dimension, the residual assisted memory is computed at the
saturated tester order.  The smallest dimension satisfying
\begin{equation}
  \mathcal N_{RM'}^{(k_\ast)}(L)
  \leq\epsilon_\ast
  \label{eq:app-memory-cost-test}
\end{equation}
is reported.  The tolerance obeys
\eqref{eq:memory-cost-tolerance}, so the decision is separated from solver,
order, and reconstruction floors.

For deterministic parity processes, the one-bit upper bound is attained.  The trivial one-dimensional extension leaves the witness in
\eqref{eq:app-one-bit-witness} above every selected
$\epsilon_\ast<1$, while the one-bit running parity makes the conditional
witness vanish.  For the common-cause family, supplying the known minimal
dilation gives the upper bound $\log_2d_{M_{\min}}$ before recovery and zero
after the decoded $R$ crosses the tolerance.  These analytic checks accompany
the numerical scans in Figure~\ref{fig:robustness}(d).

\subsection{Reproducibility protocol}
\label{app:reproducibility}

Each plotted data set records the model dimension, the instrument family, the
base divergence, the tester span, the code-subspace dimension, the recovery
calibration, and the numerical provenance.  Reproducing a curve consists of:
\begin{enumerate}
  \item constructing $\mathsf E_{\lambda,C}$ at each recovery coordinate;
  \item evaluating the late-only and assisted combs under the same tester
        family;
  \item minimizing over the factorized Markov reference class;
  \item repeating the calculation at increasing tester order until
        \eqref{eq:order-residual} is satisfied;
  \item storing the lower and upper certificates together with the raw
        optimizer output.
\end{enumerate}
The model code and figure data are released with these metadata so that
every closed-form, grid-certified, and bracketed curve can be regenerated
without reconstructing choices from the manuscript text.

\section{Semiclassical bookkeeping}
\label{app:semibook}

\subsection{Exterior and island branches}
\label{app:semibranches}

For a radiation region $X$, the adiabatic bookkeeping used in
Section~\ref{sec:semiclassical} assigns the two generalized-entropy branches
\begin{align}
  S_{\mathrm{gen}}^{\mathrm{ext}}(X;u)
  &=\Sigma_{\mathrm{rad}}(X;u)+H_{\mathrm{rec}}(X),
  \label{eq:app-ext-branch}\\
  S_{\mathrm{gen}}^{\mathrm{isl}}(X;u)
  &=S_{\mathrm{BH}}(u_{\mathrm{last}})
    +\Sigma_{\mathrm{rad}}(\overline X;u)+H_M.
  \label{eq:app-isl-branch}
\end{align}
Here $\Sigma_{\mathrm{rad}}(X;u)$ is the coarse radiation entropy in the
selected intervals, $\Sigma_{\mathrm{rad}}(\overline X;u)$ is the partner
entropy assigned to the complementary emitted bins, $H_{\mathrm{rec}}(X)$
counts diary imprints retained in the exterior branch, and $H_M$ is the
diary-reference entropy.  The precise interval assignments follow the
arbitrary-subset prescription of
Refs.~\cite{Hollowood:2020cou,Hollowood:2020couPage}.

For a bin
$b_i=[u+(i-1)\Delta u,u+i\Delta u]$, the entropy flux in the exponential
background is
\begin{equation}
  F_i(u)
  =S_{\mathrm{BH}}\bigl(u+(i-1)\Delta u\bigr)
   -S_{\mathrm{BH}}\bigl(u+i\Delta u\bigr).
  \label{eq:app-bin-flux}
\end{equation}
Substituting the appropriate sums of $F_i$ into
\eqref{eq:app-ext-branch} and \eqref{eq:app-isl-branch} gives the four gaps
$\Delta_X$ in~\eqref{eq:gapdef}.  Their zeros determine the exchange times
used in Figure~\ref{fig:jtwindow}.

\subsection{Compatible islands and geometric cancellation}
\label{app:compatibleislands}

Write the witness as the signed sum
\begin{equation}
  W_R^{(3)}=\sum_{X\in\mathcal X_3}\sigma_X S(X),
  \qquad
  \sigma_X=
  \begin{cases}
    +1,&X=Rb_1b_2,\ Rb_2b_3,\\
    -1,&X=Rb_2,\ Rb_1b_2b_3.
  \end{cases}
  \label{eq:app-witness-signs}
\end{equation}
When the four selected saddles contain a common island $I_\star$, the area and
endpoint terms form the same inclusion-exclusion combination as the radiation
regions.  Their exact cancellation gives
\begin{equation}
  W_R^{(3)}
  =I_{\mathrm{bulk}}(b_1:b_3\mid b_2RI_\star).
  \label{eq:app-compatible-reduction}
\end{equation}
A mismatch of island endpoints or of the selected replica state contributes a
signed residual.  Its magnitude is the compatibility defect
$\eta_{\mathrm{QES}}^{(3)}$ appearing in
\eqref{eq:semired}.  Applying the computable recovery bound to the bulk state
then gives~\eqref{eq:semibridge}.

\subsection{Sharp saddles and the gap identity}
\label{app:sharpsaddles}

Let $[x]_+=\max\{x,0\}$.  Since
$S_{\mathrm{gen}}^{\mathrm{dom}}=S_{\mathrm{gen}}^{\mathrm{ext}}-[\Delta_X]_+$,
substitution into~\eqref{eq:app-witness-signs} gives
\begin{equation}
  W_{R,\mathrm{sharp}}^{(3)}(u)
  =W_R^{(3),\mathrm{ext}}(u)
   -\sum_{X\in\mathcal X_3}\sigma_X[\Delta_X(u)]_+.
  \label{eq:app-sharp-witness}
\end{equation}
The gaps obey the inclusion-exclusion identity
\begin{equation}
  \sum_{X\in\mathcal X_3}\sigma_X\Delta_X(u)
  =W_R^{(3),\mathrm{ext}}(u)-W_R^{(3),\mathrm{isl}}(u).
  \label{eq:app-gap-identity}
\end{equation}
For the parity diary in the exterior regime,
$W_R^{(3),\mathrm{ext}}=1$ bit.  A negatively signed region can exchange first
in a decreasing-flux background; the corresponding term in
\eqref{eq:app-sharp-witness} raises the separately minimized expression.  This
is the origin of the sharp-saddle overshoot.  The identity also makes clear
why separate minimization is sensitive to saddle compatibility: the four
positive-part operations need not correspond to one joint replicated
geometry.

\subsection{Soft-min interpolation and the defect bound}
\label{app:softmin}

A two-saddle interpolation assigns
\begin{equation}
  S_X^{\mathrm{soft}}
  =-\log_2\!
   \left(2^{-S_X^{\mathrm{ext}}}+2^{-S_X^{\mathrm{isl}}}\right)
  =\min\{S_X^{\mathrm{ext}},S_X^{\mathrm{isl}}\}
   -\log_2\!\left(1+2^{-|\Delta_X|}\right).
  \label{eq:app-soft-min}
\end{equation}
The absolute change of the four-entropy combination relative to the sharp
choice is therefore bounded by
\begin{equation}
  \eta_{\mathrm{QES}}^{(3)}(u)
  \leq\sum_{X\in\mathcal X_3}
    \log_2\!\left(1+2^{-|\Delta_X(u)|}\right)
  \leq4\log_2\!\left(1+2^{-\Delta_{\min}(u)}\right),
  \label{eq:app-soft-defect}
\end{equation}
where $\Delta_{\min}=\min_X|\Delta_X|$.  This interpolation quantifies finite
gap sensitivity.  It does not impose compatibility among the four replica
geometries, so the joint replica calculation remains the physical evaluation
of the crossover.

\subsection{Background and code-subspace checks}
\label{app:semiclassicalchecks}

The entropy-conserving exponential model uses
\begin{equation}
  S_{\mathrm{BH}}(u)=S_0e^{-u/\tau},
  \qquad
  S_{\mathrm{rad}}(u)=S_0-S_{\mathrm{BH}}(u).
  \label{eq:app-exponential-background}
\end{equation}
The numerical check against the exact AEMM boundary reparametrization compares
three quantities over the Page-regime window: the boundary trajectory, the
coarse entropy flux, and the QES scrambling lag.  The exponential model is
used only where these quantities remain within the stated numerical tolerance;
the exact solution supplies the systematic background check rather than an
additional fit parameter.

If a code-subspace contribution $\log_2\dim C$ enters a gap additively, the
implicit equation $\Delta_X(u_X^\ast,C)=0$ gives
\begin{equation}
  \frac{\partial u_X^\ast}{\partial\log_2\dim C}
  =-\frac{\partial_{\log_2\dim C}\Delta_X}
          {\partial_u\Delta_X(u_X^\ast,C)}.
  \label{eq:app-general-code-shift}
\end{equation}
With the sign convention used in Section~\ref{sec:semiclassical},
$\partial_{\log_2\dim C}\Delta_X=-1$, which reduces
\eqref{eq:app-general-code-shift} to~\eqref{eq:codeshift}.  This is the
semiclassical counterpart of the overload shift in
\eqref{eq:app-decoding-threshold}.

\section{Predictive influence and the process wedge}
\label{app:influence}

\subsection{Predictive equivalence and order localization}
\label{app:orderlocalization}

Definition~\ref{def:pw} uses substitution-level predictive equivalence: two
registers are equivalent when their slot-to-future maps agree for every
admissible preparation and every $\rho\in C$. A stronger
intervention-resolved relation declares $X\sim_K^{\mathrm{int}}X'$ when
\begin{equation}
  d_{\mathrm{op}}^{(K)}\!\left(
    \Upsilon_L[\mathcal J_X;\rho],
    \Upsilon_L[\mathcal J_{X'};\rho]
  \right)=0
  \label{eq:app-influence-equivalence}
\end{equation}
for every admissible intervention $\mathcal J$ and every $\rho\in C$. Because
the intervention family contains the trivial operation and preparation maps,
intervention-resolved equivalence refines the substitution relation and every
intervention-resolved simulation bound implies membership in the wedge of
Definition~\ref{def:pw}.

\begin{lemma}[Order localization]
\label{lem:orderloc}
If
$d_{\mathrm{op}}^{(K)}(\Upsilon_1,\Upsilon_2)\leq\varepsilon$, then every
tester that distinguishes the two processes by more than $\varepsilon$ spans
more than $K$ late slots.
\end{lemma}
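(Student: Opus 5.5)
The plan is to prove the contrapositive directly from the definition of the restricted operational distance. By the definition preceding \eqref{eq:orderK}, $d_{\mathrm{op}}^{(K)}$ is the supremum in \eqref{eq:dop} taken only over testers spanning at most $K$ late slots:
\begin{equation}
  d_{\mathrm{op}}^{(K)}(\Upsilon_1,\Upsilon_2)
  =\sup_{T\in\mathcal I,\ \mathrm{span}(T)\leq K}
   D\bigl(T[\Upsilon_1]\,\big\|\,T[\Upsilon_2]\bigr).
\end{equation}
Take any tester $T$ from $\mathcal I$ with $D(T[\Upsilon_1]\|T[\Upsilon_2])>\varepsilon$, and suppose for contradiction that $T$ spans at most $K$ slots. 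Then $T$ is an admissible element of the restricted class. Its value is therefore bounded by the supremum, giving $D(T[\Upsilon_1]\|T[\Upsilon_2])\leq d_{\mathrm{op}}^{(K)}(\Upsilon_1,\Upsilon_2)\leq\varepsilon$. This contradicts the assumption, so $T$ must span more than $K$ slots.

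The only point needing care is what ``spanning'' means for an adaptive tester. I would fix the convention used throughout the paper: a tester spans a slot set $S$ if it acts nontrivially, with anything other than the trace instrument, only on slots in $S$. Its span is then $|S|$. The action on the comb factors through the reduced comb $\Upsilon|_S$ of \eqref{eq:app-reduced-comb}, because applying the trace instrument elsewhere is exactly the reduction map.

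Adaptivity needs one check. Classical or quantum records carried between slots in $S$ keep the tester inside the span-$|S|$ class. Conditioning the choice at a slot outside $S$ on earlier outcomes would make that slot nontrivial, so it would enlarge the span. Hence the dichotomy ``at most $K$'' versus ``more than $K$'' is exhaustive for every tester in $\mathcal I$.

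I expect this bookkeeping to be the only obstacle, and it is definitional rather than analytic. The nesting of tester classes already asserted after \eqref{eq:orderK} ensures the restricted class is closed under this reduction. With that in hand, the lemma is immediate from the supremum bound.
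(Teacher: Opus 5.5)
Your proof is correct and is essentially the paper's argument: a tester spanning at most $K$ slots lies in the class over which $d_{\mathrm{op}}^{(K)}$ takes its supremum, so its divergence is at most $\varepsilon$, which contradicts the assumption that it distinguishes the processes by more than $\varepsilon$. The extra bookkeeping on what ``span'' means for adaptive testers is harmless but not needed, since the paper treats the span-restricted tester class as given in the definition of $d_{\mathrm{op}}^{(K)}$.
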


\emph{Proof.}
The definition of $d_{\mathrm{op}}^{(K)}$ takes the supremum over every tester
with span at most $K$. A tester in that class with divergence exceeding
$\varepsilon$ would contradict the assumed bound. \hfill$\square$

The lemma gives the filtration its direct operational meaning. Simulation at
resolution $K$ leaves any residual difference above the selected tolerance
accessible only to longer experiments.

\subsection{Composition of reconstruction with influence}
\label{app:ewinclusion}

Let a register $X$ be reconstructible from $R$ on $C$ through
$\mathcal R_{R\to X}$ with infidelity $\epsilon_X(C)$.  For each intervention
$\mathcal J_X$, apply the same operation to the reconstructed register
$\widehat X$.  The Fuchs and van de Graaf bound followed by data processing
through the intervention and the $K$-slot late process gives
\begin{equation}
  \sup_{\mathcal J}\sup_{\rho\in C}
  d_{\mathrm{op}}^{(K)}\!\left(
    \Upsilon_L[\mathcal J_X;\rho],
    \Upsilon_L[\mathcal J_{\widehat X};\rho]
  \right)
  \leq c_{\mathcal I,K}\sqrt{\epsilon_X(C)}.
  \label{eq:app-ew-influence-bound}
\end{equation}
Thus the quotient image $q_K([X])$ belongs to the process wedge at the
corresponding tolerance, proving the inclusion used in Definition~\ref{def:pw}. The
admissible-intervention family contains the trivial and preparation cases, so
\eqref{eq:app-ew-influence-bound} implies the substitution criterion in
Definition~\ref{def:pw}. The inclusion is uniform on the code subspace because
one recovery channel is used for every $\rho\in C$.

A register whose influence is invisible below order $K_0$ belongs to the
trivial class at every $K<K_0$.  The parity family supplies the canonical
example: all proper marginals coincide with the Markov reference, while the
class separates at $K_0$.  This finite-resolution coarsening is an order
statement, not a claim that the microscopic state is reconstructible.

\subsection{Full resolution and predictive faithfulness}
\label{app:fullresolution}

At full resolution, an informationally complete instrument family determines
the entire influence comb of a register. The inclusion from left to right is
the zero-reconstruction-error case of Appendix~\ref{app:ewinclusion}. For the converse, predictive faithfulness
means that distinct states of $M_{\min}$ induce distinct full-resolution
future functionals on $C$; the injectivity argument closing
Appendix~\ref{app:simulator} then turns exact influence simulation into state
reconstruction up to predictive equivalence. Consequently, in the exact
faithful core,
\begin{equation}
  q_\infty\!\left(EW(R;C)\right)
  =\mathrm{PW}^{(\infty),0}(R;C,\mathcal I)
  \label{eq:app-faithful-core}
\end{equation}
when both sides are read as predictive classes and assumptions B1 and B2 hold.
This equality is a conditional statement about the faithful core.  Restricted
instruments, finite tolerance, spectator sectors, and independent side
channels replace it by the established inclusion
\eqref{eq:app-ew-influence-bound}.

\section{Algebraic bridge and notation}
\label{app:algebraic}

\subsection{Finite-dimensional algebraic bridge}
\label{app:algebraicbridge}

Let $\mathcal A$ be a finite-dimensional von Neumann algebra, with predictive
island subalgebra $\mathcal A_I$ and early-radiation subalgebra
$\mathcal A_R$.  A state-preserving conditional expectation
$E_R:\mathcal A\to\mathcal A_R$ specifies the accessible coarse-graining.  A
recovery morphism
$\beta^\dagger:\mathcal A_I\to\mathcal A_R$ and its Schr\"odinger-picture
predual $\beta_*$ specify reconstruction.  These are distinct maps.

For a subalgebra $\mathcal B$, let $\mathcal N_{\mathcal B}(L)$ denote the
operational memory when the tester may process $\mathcal B$ on the initial
side leg. Assume algebraic mediation: conditioned on $\mathcal A_I$, the
early algebra carries no additional late-process correlation, up to
$\delta_{\mathrm{med}}$. Assume also
\begin{equation}
  \mathcal N_{\mathcal A_I}(L)\leq\delta_M
  \label{eq:app-algebraic-sufficiency}
\end{equation}
and that one reconstruction map works uniformly on $C$ in the operational
norm relevant to the late process,
\begin{equation}
  \sup_{\rho\in C}\sup_{\mathcal J,T}
  D\!\left(
    T[\Upsilon_L(\mathcal J_{\mathcal A_I};\rho)],
    T[\Upsilon_L(\mathcal J_{\beta^\dagger(\mathcal A_I)};\rho)]
  \right)
  \leq\epsilon(C).
  \label{eq:app-algebraic-recovery}
\end{equation}
Here the suprema run over the admitted island interventions and late testers.
Replacing the true island intervention by its reconstructed representative and
using data processing gives
\begin{equation}
  \mathcal N_{\mathcal A_R}(L)\big|_C
  \leq\delta_{\mathrm{med}}+\delta_M
  +\omega_{\mathcal I,n}(\epsilon(C)).
  \label{eq:app-algebraic-bridge}
\end{equation}
This is the algebraic form of the forward bridge.  In a tensor product
$\mathcal A=\mathcal B(\mathcal H_R)\overline\otimes
\mathcal B(\mathcal H_{\bar R})$, a state-preserving expectation acts
schematically as
\begin{equation}
  E_R(a_R\otimes a_{\bar R})
  =a_R\,\omega(a_{\bar R})\otimes\mathbf 1_{\bar R},
  \label{eq:app-tensor-expectation}
\end{equation}
while $\beta_*$ is the separate Petz-type recovery channel.  Equation
\eqref{eq:app-algebraic-bridge} therefore reduces to Theorem~\ref{thm:bridge}
without identifying coarse-graining with recovery.

Petz sufficiency is naturally formulated for subalgebras~\cite{Petz}.
Crossed-product constructions provide type-$\mathrm{II}$ algebras with traces
and generalized entropies in gravitational subregions
\cite{LeutheusserLiu,WittenCrossed,CPW}.  Extending
\eqref{eq:app-algebraic-recovery} to type-$\mathrm{II}$ and type-$\mathrm{III}$
settings requires an energy-bounded class of local testers and a continuity
estimate in a completely bounded or operational norm.  The finite-dimensional
statement above fixes the object that such an extension must control.

\subsection{Notation dictionary}
\label{app:notation}

\begin{table}[t]
  \centering
  \footnotesize
  \renewcommand{\arraystretch}{1.08}
  \begin{tabular}{p{0.18\textwidth}p{0.70\textwidth}}
    \hline
    Symbol & Meaning \\
    \hline
    $b_i$, $L$ & Individual late radiation bins and their ordered collection. \\
    $R$ & Early radiation retained as initial side information. \\
    $M$ & A sufficient interior memory in a chosen dilation. \\
    $M_{\min}$ & Minimal predictive memory, unique up to predictive equivalence. \\
    $A$ & External reference that keeps the interior record mixed. \\
    $C$, $\epsilon(C)$ & Code subspace and uniform reconstruction error on it. \\
    $F(\rho,\sigma)$ & Root fidelity, $\|\sqrt\rho\sqrt\sigma\|_1$; infidelity means $1-F$. \\
    $\lambda(t,C)$ & Recovery coordinate calibrated by retarded time and code-subspace size. \\
    $\delta_{\mathrm{med}}$, $\delta_M$ & Mediation residual and memory remaining after conditioning on the interior; $\delta=\delta_{\mathrm{med}}+\delta_M$. \\
    $\tau_\beta$ & Thermal single-bin reference state. \\
    $\mathcal I$ & Admissible instrument and tester family. \\
    $\Upsilon_L$, $\Upsilon_{L|R}$ & Late-only and early-radiation-assisted process tensors. \\
    $\mathrm{CPD}$, $\mathrm{CPD}_R$ & Factorized Markov-comb reference classes without and with the static side register. \\
    $d_{\mathrm{op}}^{(K)}$ & Tester-based process distance restricted to span at most $K$. \\
    $\mathcal N_\varnothing$, $\mathcal N_R$ & Late-only and assisted temporal memories. \\
    $\chi_{\mathrm{island}}$ & Memory removed by access to $R$, $\mathcal N_\varnothing-\mathcal N_R$. \\
    $m_\epsilon(L|R)$ & Smallest additional predictive memory that Markovianizes $L$ to tolerance $\epsilon$. \\
    $k_\ast$ & Saturation order; in the single-order family it is also the first visible order. \\
    $\epsilon_{\mathrm{num}}$, $\epsilon_{\mathrm{trunc}}(K)$, $\epsilon_\ast$ & Numerical floor, windowed order residual, and memory-cost tolerance. \\
    $W_\varnothing^{(3)}$, $W_R^{(3)}$ & Passive order-three conditional-mutual-information witnesses. \\
    $\Delta_X$, $\eta_{\mathrm{QES}}^{(3)}$ & Generalized-entropy saddle gap for region $X$ and the four-entropy compatibility defect. \\
    $H_M$ & Entropy of the interior diary or record relative to its reference. \\
    $\mathcal A_I$, $\mathcal A_R$ & Predictive island algebra and early-radiation algebra. \\
    $\mathrm{PW}^{(K),\epsilon}$ & Process wedge of predictive influence classes at fixed resolution and tolerance. \\
    $EW(R)$ & Entanglement wedge, understood relative to the same code subspace and recovery tolerance. \\
    \hline
  \end{tabular}
  \caption{Notation used throughout the manuscript.  Every process quantity is
  relative to the stated instrument family, code subspace, temporal resolution,
  and tolerance when those labels are suppressed.}
  \label{tab:notation}
\end{table}

\end{document}